\numberwithin{equation}{section} \setlength{\textwidth}{6.5in}
  \newtheorem{Theorem}{Theorem}[section]
\newtheorem{Lemma}[Theorem]{Lemma}
\newtheorem{Proposition}[Theorem]{Proposition}
\newcommand\AMSname{AMS subject classifications}
\begin{document}

\title{Ferromagnetic thin multi-structures}
\author{Antonio Gaudiello\footnote{DIEI,
Universit\`a degli Studi di Cassino e del Lazio Meridionale, via
G. Di Biasio 43, 03043 Cassino (FR), Italia. e-mail:
gaudiell@unina.it} $\,$and Rejeb Hadiji\footnote{Universit\'e Paris-Est, LAMA,
 Laboratoire d'Analyse et de Math\'ematiques Appliqu\'ees, UMR 8050, UPEC, F-94010, Cr\'eteil, France. e-mail: hadiji@u-pec.fr}}\date{ }
\maketitle

\begin{abstract}
In this paper, starting from  the classical $3D$ non-convex and
nonlocal micromagnetic energy for ferromagnetic materials, we
determine, via an asymptotic analysis, the free energy of a
multi-structure consisting of a nano-wire in junction with a  thin
film and of a  multi-structure consisting of two joined
nano-wires. We assume that the volumes of the two parts composing
each multi-structure vanish with  same rate. In the first case, we
obtain a $1D$ limit problem on the nano-wire and a $2D$ limit problem
on the thin film, and the two limit problems are uncoupled.
 In the second case, we obtain two $1D$
limit problems coupled by a junction condition on the
magnetization. In both cases, the limit problem remains
non-convex, but now it becomes
 completely local.

\medskip

\noindent Keywords: {micromagnetics, variational problem, thin
film, nano-wire, junctions.}
\medskip

\par
\noindent2000 \AMSname: 78A25, 49S05, 78M35
\end{abstract}

\section{Introduction}
In this paper, starting from  the classical $3D$ micromagnetic
energy for ferromagnetic materials (see L. D. Landau and E. M.
Lifshitz \cite{LL} and  W. F. Brown \cite{Br}), we determine, via
an asymptotic analysis, the free energy of a  multi-structure
consisting of a nano-wire in junction with a  thin film and of a
multi-structure consisting of two joined nano-wires. These
multi-structures appear in nano electronic devices (for instance,
see \cite{FA} and \cite{OK}). For reasons of simplicity and
economy, especially by a numerical point of view, one tries to
reshape three-dimensional multi-structures, with multi-structures
having a smaller size in thin components. 
\smallskip

In the sequel, $x=(x_1, x_{2},x_3)$ denotes the generic point of
$\mathbb R^3$. If $\eta_1$, $\eta_2$, $\eta_3\in \mathbb{R}^3$,
then $ (\eta_1\vert \eta_2\vert \eta_3)$ denotes the $3 \times 3$
real matrix having $\eta_1^T$ as first column, $\eta_2^T$ as
second column, and $\eta_3^T$ as third column. In according to
this notation, if $v:A\subset
\mathbb{R}^3\rightarrow\mathbb{R}^3$, then $D v$ denotes the $3
\times 3$ real matrix $\left(D_{x_1} v| D_{x_2} v\vert D_{x_3}
v\right)$, where $D_{x_i}v\in \mathbb{R}^3$, i=1,2,3, stands for
the derivative of $v$ with respect to $x_i$.\smallskip

Let  $ \left\{h_n\right\}_{n \in \mathbb N}\subset]0,1[$ be a
vanishing sequence of positive numbers, and let $\Theta
\subset\mathbb ]0,1[^2$ be an open connected  set with smooth
boundary. In this paper, we consider two kinds of  thin
multistructures in $\mathbb{R}^3$. In the first case, for every
$n\in \mathbb{N}$, we set
$$\Omega_n=\left(h_n\Theta\times [0,1[\right)\cup
 \left(\Theta\times ]-h_n^2,0[\right),$$
 which approximates a wire in junction with a thin film (see Fig. 1), as $n$ diverges. In the
 second case, we set
 $$\Omega_n=\left(]-h_n,0[^2\times [0,1[\right)\times \left(]-h_n,1[\times ]-h_n,0[^2\right)$$
which approximates two joined wires (see Fig. 2), as $n$ diverges.
In both cases, the volumes of the two parts of the multi-structure
vanish with  same rate.
\begin{figure}[h]
\centering
\includegraphics[width=9cm]{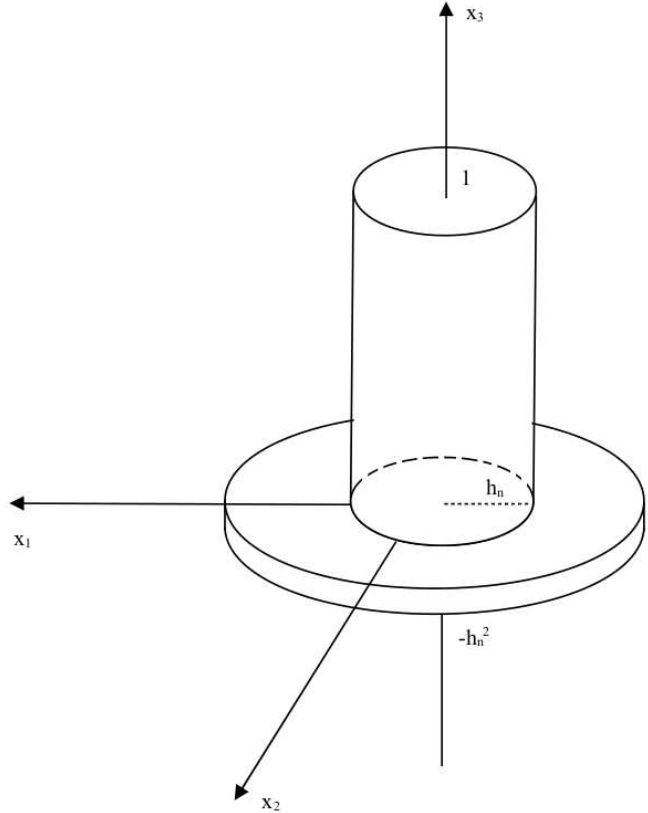}
 \caption{$\Omega_n$ in the case wire - thin film}\label{Fig.1}
\end{figure}
\begin{figure}[h]
\centering
\includegraphics[width=9cm]{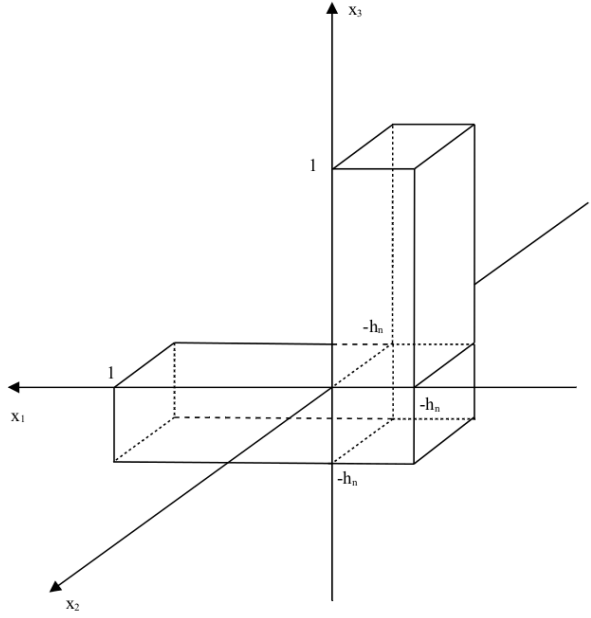}
 \caption{$\Omega_n$ in the case wire - wire}\label{Fig.2}
\end{figure}
 The aim of this paper is to study the asymptotic
behavior, as $n$ diverges, of the following non-convex, nonlocal
variational problem:
\begin{equation}\label{proiniz}\left\{\begin{array}{ll}\begin{array}{ll}
 J_n=\min\Bigg\{\displaystyle{
\int_{\Omega_n}\left(\lambda\vert
DM\vert^2+\varphi(M)+\frac{1}{2}DU_{M} M-2F_nM\right)dx:}\,\,\\
\quad\quad\quad\quad\quad\quad M\in
H^1(\Omega_n,S^2)\Bigg\},\end{array}\\\\\hbox{div}(-DU_{M}+M)=0
\hbox{
 in }\mathbb{R}^3,
\end{array}\right.\end{equation} where  $\lambda$ is a
positive constant, $\varphi: S^2\rightarrow[0,+\infty[$ is a
continuous and even function, $S^2$ denotes the unit sphere of
$\mathbb{R}^3$, and  $F_n\in L^2(\Omega_n, \mathbb R^3)$. It is
understood that $M=0$ in
$\mathbb{R}^3\setminus\Omega_n$.\smallskip

 In classical theory of micromagnetics,
$M:\Omega_n\rightarrow\mathbb{R}^3$ denotes the magnetization and
the body is always locally magnetized to a saturation
magnetization $\vert M(x)\vert=m(T)>0$  unless the local
temperature $T$ is greater or equal to Curie temperature depending
on the body. In the latter case $m(T)=0$, and the material ceases
to behave ferromagnetically. In this paper, we suppose constant
temperature lower than Curie temperature and, without loss of
generality, we assume that $m=1$, that is $M(x)\in S^2$. The
exchange energy $\int_{\Omega_n}\vert DM\vert^2dx$ penalizes the
spatial variation of $M$, driving the body to have large regions
of uniform magnetization separated by thin transition layers. The
scalar function $U_{M}:\mathbb{R}^3\rightarrow\mathbb{R}$ is the
so-called magnetostatic potential. The magnetostatic energy
$\int_{\Omega_n}DU_{M} Mdx=\int_{R^3}\vert DU_{M}\vert^2dx$ favors
div$M=0$  in $\Omega_n$ and $M\cdot \nu=0$ on $\partial \Omega_n$,
where $\nu$ is the exterior unit normal to $\partial \Omega_n$.
The constant $\lambda$ is typically on order of 100 nanometers and
measures the relative strength of exchange energy with respect to
the magnetostatic energy. The anisotropy energy
$\int_{\Omega_n}\varphi(M)dx$ favors   magnetization along special
crystallographic directions, while the external (Zeeman) energy
$\int_{\Omega_n}F_nMdx$ favors magnetization parallel to an
externally applied field.\smallskip

Reformulating the problem on a fixed domain through appropriate
rescalings of the kind proposed by P. G. Ciarlet and P. Destuynder
\cite{CD}, imposing appropriate convergence assumptions on the
rescaled exterior fields and  using   the main ideas of
$\Gamma$-convergence method introduced by E. De Giorgi \cite{DGF},
we derive the limit problem  in both previous cases. Specifically,
in the  case: wire - thin film, we prove that (see Theorem
\ref{ultimo})
\begin{equation}\nonumber\begin{array}{l}\displaystyle{\lim_n\frac{J_n}{h^2_n}=\min\Bigg\{\vert\Theta\vert\int_0^1\left(\lambda\left\vert
\frac{d\mu^a}{dx_3}\right\vert^2+\varphi(\mu^a)-\frac{2}{\vert\Theta\vert}F^a\mu^a
 \right)dx_3+}\\\\\displaystyle{
\frac{1}{2}\Bigg(\alpha(\Theta)\int_0^1\vert\mu^a_1\vert^2dx_3+\beta(\Theta)\int_0^1\vert\mu^a_2\vert^2dx_3+ \gamma(\Theta)\int_0^1\mu^a_1\mu^a_2dx_3\Bigg):}\\\\
\quad\quad \mu^a=(\mu^a_1,\mu^a_2,\mu^a_3)\in
H^1\left(\left]0,1\right[,S^2\right)\Bigg\}+\\\\
\displaystyle{\min\Bigg\{\int_{\Theta}\left(\lambda\left\vert
D\mu^b\right\vert^2+ \varphi(\mu^b)+ \frac{1}{2}\vert
\mu^b_3\vert^2-2F^b\mu^b\right)dx_1dx_2:}\\\\
\quad\quad \mu^b=(\mu^b_1,\mu^b_2,\mu^b_3)\in
H^1\left(\Theta,S^2\right)\Bigg\}.
\end{array}\end{equation}
In the  case: wire - wire, we prove that
 (see Theorem \ref{ultimow-w})
\begin{equation}\nonumber\begin{array}{l}\displaystyle{\lim_n\frac{J_n}{h_n^2}=\min\Bigg\{\int_0^1\left(\lambda\left\vert
\frac{d\mu^a}{dx_3}\right\vert^2+\varphi(\mu^a)-2F^a\mu^a
 \right)dx_3+}\\\\\displaystyle{
\frac{1}{2}\Bigg(\alpha(]-1,0[^2)\int_0^1\vert\mu^a_1\vert^2dx_3+\beta(]-1,0[^2)\int_0^1\vert\mu^a_2\vert^2dx_3+ \gamma(]-1,0[^2)\int_0^1\mu^a_1\mu^a_2dx_3\Bigg)+}\\\\
\displaystyle{\int_0^1\left(\lambda\left\vert
\frac{d\mu^b}{dx_1}\right\vert^2+\varphi(\mu^b)-2G^b\mu^b
 \right)dx_1+}\\\\\displaystyle{
\frac{1}{2}\Bigg(\alpha(]-1,0[^2)\int_0^1\vert\mu^b_2\vert^2dx_1+\beta(]-1,0[^2)\int_0^1\vert\mu^b_3\vert^2dx_1+ \gamma(]-1,0[^2)\int_0^1\mu^b_2\mu^b_3dx_1\Bigg):}\\\\
(\mu^a,\mu^b)=\left((\mu^a_1,\mu^a_2,\mu^a_3),(\mu^b_1,\mu^b_2,\mu^b_3)\right)\in
H^1\left(\left]0,1\right[,S^2\right)\times
H^1\left(\left]0,1\right[,S^2\right), \,\,
\mu^a(0)=\mu^b(0)\Bigg\}.
\end{array}\end{equation}
Above, $F^a(x_3)$ is the integral in $dx_1dx_2$  of the $L^2$-weak
limit of the rescaled external field in the vertical domain,
$F^b(x_1,x_2)$ is the integral in $dx_3$  of the $L^2$-weak limit
of the rescaled external field in the horizontal domain,
$G^b(x_1)$ is the integral in $dx_2dx_3$ of the $L^2$-weak limit
of the rescaled external field in the horizontal domain. To define
coefficients $\alpha$, $\beta$, $\gamma$, if $S\subset\mathbb
R^{2}$ is a bounded open connected set, we introduce the weak
solutions $p$ and $q$, depending on $S$, of the following problems
\begin{equation}\nonumber\left\{\begin{array}{l}
p\in W^1(\mathbb{R}^2),\\\\ \Delta p=0 \hbox{ in }S,\\\\
\Delta p=0 \hbox{ in }\mathbb{R}^2\setminus S,\\\\
\displaystyle{\left[\frac{\partial p}{\partial \nu}\right]=\nu e_1
\hbox{ on }
\partial S,}
\end{array}\right.\quad \left\{\begin{array}{l}
q\in W^1(\mathbb{R}^2),\\\\ \Delta q=0 \hbox{ in }S,\\\\
\Delta q=0 \hbox{ in }\mathbb{R}^2\setminus S,\\\\
\displaystyle{\left[\frac{\partial q}{\partial \nu}\right]=\nu e_2
\hbox{ on }
\partial S,}
\end{array}\right.\end{equation}
where $W^1(\mathbb{R}^2)$ denotes the Beppo-Levi space on
$\mathbb{R}^2$ (see Section \ref{preliminaries}), $\nu$  the
exterior unit normal to $\partial S$,   $\left[\frac{\partial
\cdot}{\partial \nu}\right]$    the jump of $\frac{\partial
\cdot}{\partial \nu}$ on $\partial S$,   and $e_1=(1,0)$,
$e_2=(0,1)$. Then, we set
\begin{equation}\label{abcstarrrr}\begin{array}{l}\displaystyle{ \alpha(S)= \int_{\mathbb{R}^2}\vert
Dp\vert^2dydz,\quad \beta(S)=\int_{\mathbb{R}^2}\vert
Dq\vert^2dydz, \quad \gamma(S)=2\int_{\mathbb{R}^2} DpDqdydz}
,\end{array}\end{equation} where $(y,z)$ denote the coordinates in
$\mathbb{R}^2$. We remark that, if $S$ is  sufficiently smooth,
definitions in (\ref{abcstarrrr}) are equivalent to
\begin{equation}\nonumber\begin{array}{l}\displaystyle{\alpha(S)= \int_{\partial S} p\nu e_1ds,\quad \beta(S)=\int_{\partial
S} q\nu e_2ds, \quad \gamma(S)=\int_{\partial S} q\nu e_1
ds+\int_{\partial S} p\nu e_2ds .}
\end{array}\end{equation}

If $S=\{(x_1,x_2)\in \mathbb{R}^2: x_1^2+x_2^2<1\}$, it results
that $\alpha(S)=\beta(S)=\frac{\pi}{2}$ and $\gamma(S)=0$ (see
Theorem 3.1 in \cite{Sa}).

In the case: wire - thin film, we obtain a $1D$ limit problem on
the wire and a $2D$ limit problem on the thin film, and the two
limit problems are uncoupled. In particular, if
$\Theta=\{(x_1,x_2)\in \mathbb{R}^2: x_1^2+x_2^2<1\}$,
$\varphi=0$, $F^a=0$ and $F^b=0$, then the minimum in the wire is
attained by $(0,0,1)$ or $(0,0,-1)$, while the minimum in the thin
film is attained by every constant $S^2$-vector  parallel to the
thin film.

 In the case: wire - wire, we obtain two $1D$
limit problems coupled by the junction condition on the
magnetization $\mu^a(0)=\mu^b(0)$.

In both cases, the limit problem remains non-convex, but now it
becomes
 completely local.
 Strong convergences in
  $H^1$-norm are obtained for the rescaled
  magnetization.\smallskip

In Section \ref{preliminaries}, we recall the definition and some
properties of the Beppo Levi space on $\mathbb{R}^2$. In Section
\ref{w-f}, we study the case wire - thin film. We use two
different rescalings: one for the wire and a second one for the
thin film. The main difficulty is to identify the limit of the
magnetostatic energy. While it is quite classical in the thin film
where only the component of the magnetization orthogonal to the
film appears in the limit (see \cite{GiJa}), it becomes more
complicated in the wire where  the following combination of  the
first two components of the magnetization with coefficients
involving solutions of PDE in Beppo Levi space on $\mathbb{R}^2$
intervene:
$\alpha(\Theta)\int_0^1\vert\mu^a_1\vert^2dx_3+\beta(\Theta)\int_0^1\vert\mu^a_2\vert^2dx_3+
\gamma(\Theta)\int_0^1\mu^a_1\mu^a_2dx_3$. These coefficients
depend on the geometry of the cross section of the wire. We
explicitly remark that, to our knowledge, we are the first to obtain this explicit formula
for  a wire with a generic cross section. Finally, using the $\Gamma$-convergence method with
suitable test functions and a density result proved in
\cite{GaH1}, we identify the limit problem which results
uncoupled. In Section \ref{w-w}, we study the case wire - wire,
with wires having rectangular cross section. In this case,
the main difficulty is to obtain the junction condition and to perform the limit of the magnetostatic
energy. To this aim we have to use different and more
sophisticated rescaling and symmetry arguments which, in some
sense, take into account the geometry and that the limit problem
will be coupled. 

Our study  can be easily extended to treat
multi-structures as in Figure 3, or cruciform
multi-structures.\smallskip
\begin{figure}[h]
\centering
\includegraphics[width=9cm]{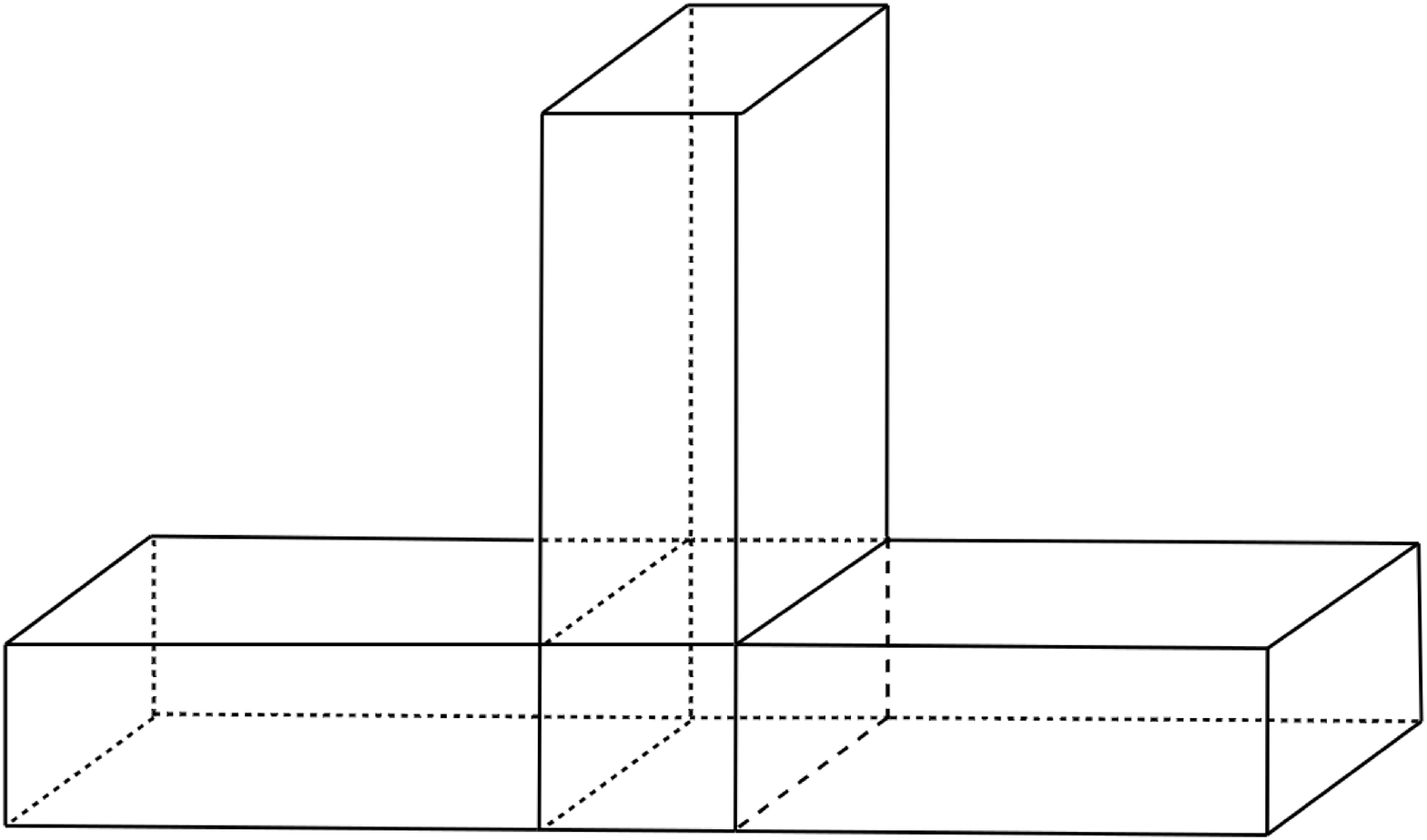}
 \caption{}\label{Fig.3}
\end{figure}

Several results regarding the study of a single ferromagnetic thin
film are present in literature. G. Gioia and R. D. James
\cite{GiJa} were the first to prove that the magnetostatic energy
behaves, at the limit, like an anisotropic local term which forces
the magnetization to be tangent to the thin film. This result was
extended by C. Leone and R. Alicandro \cite{AL} to the case with
non-convex exchange energy, and by M. Ba{\'\i}a and E. Zappale
\cite{BaZa} to a thin film with nonhomogeneous profile.  The case  with degenerative coefficients  was considered by R. Hadiji and K. Shirakawa  \cite{HaSh}. The
time-dependent case was treated by H. Ammari, L. Halpern and K.
Hamdache \cite{AHH}, and by G. Carbou \cite{Car}. F. Alouges, T.
Rivi\`{e}re and  S. Serfaty   \cite{AlRiSe} and C. Rivi\`{e}re and
S. Serfaty \cite{RS} considered an infinite cylinder where the
magnetization does not depend on the vertical coordinate. In
\cite{AlRiSe} the authors showed that bounded-energy
configurations tend to be planar, except in small regions where
one can  observe vortices. In \cite{RS} the magnetization is
moreover constrained to be in the horizontal plane, which avoids
the vortices. F. Alouges and S. Labb\'e \cite{AlLa} proposed a
model of films with strong convergence of minimizers  when the
exchange parameter vanishes and with vertically invariant
configurations on the cylindrical domain. For reproducing the non
uniform states observed experimentally in thin films, very
different regimes were considered by A. Desimone, R.V. Kohn, S.
Muller and F. Otto \cite{DKMO}, and by R.V. Kohn and V.V.
Slastikov in \cite{KS}, where $\frac{h}{l}$ and
$\frac{\lambda}{l}$ vanish, $h$ being the film thickness, $l$ the
in-plane diameter and $\lambda$ the exchange length of the
ferromagnetic material.  Ferroelectric thin films were studied by  A. Gaudiello and K. Hamdache  in \cite{GaHam}.\smallskip

Single ferromagnetic nano-wire with circular cross section and
finite length was studied by G. Carbou and S. Labb\'e \cite{CL}.
In this paper, they also consider a stabilization problem. A
similar model of wire with infinite length was studied by G.
Carbou, S. Labb\'e and E. Tr\'elat \cite{CLT}. Curved  nano-wire was examined by V.V. Slastikov and C. Sonnenberg in \cite{SS}.\smallskip

 In
\cite{GaHa3} we considered two joined ferromagnetic thin films and
we proved that the limit magnetizations are coupled when the
volumes of the two thin films vanish with the same rate.

Multi-structures like in this paper were considered in \cite{GaH1}
and \cite{GaH2}, where we developed an asymptotic analysis of
minimizing maps with values in $S^2$ for the energy
$\int_{\Omega_n}(\vert DM\vert^2-2F_nM)dx$, neglecting the term
with the nonlocal magnetostatic energy which characterizes the
actual paper.

\section{Preliminaries\label{preliminaries}}
Let \begin{equation}\nonumber W^1(\mathbb{R}^2)=\left\{ \phi\in
L^2_{\hbox{loc}}(\mathbb{R}^2): D\phi\in
\left(L^2(\mathbb{R}^2)\right)^2\right\}/\mathbb{R}\end{equation}
equipped with the inner product
\begin{equation}\label{innprod}(\phi_1,\phi_2)\in W^1(\mathbb{R}^2)\times
W^1(\mathbb{R}^2)\rightarrow\int_{\mathbb{R}^2}
D\phi_1D\phi_2dydz,\end{equation} where $(y,z)$ denote the
coordinates in $\mathbb{R}^2$. It is well known that
$W^1(\mathbb{R}^2)$ is a Hilbert space (see \cite{DeLi}, Corol.
1.1) and it is separable. Consequently, if $S\subset\mathbb R^{2}$
is a bounded open set, every one of the following problems
\begin{equation}\label{resequxx1}\left\{\begin{array}{l}p\in W^1(\mathbb{R}^2),\\\\ \displaystyle{\int_{\mathbb{R}^2} Dp
D\phi\, dydz=\int_{S}D_{y}\phi \,dydz,\quad\forall \phi\in
W^1(\mathbb{R}^2)},\end{array}\right.\end{equation}
\begin{equation}\label{resequxx2}\left\{\begin{array}{l}q\in W^1(\mathbb{R}^2),\\\\ \displaystyle{\int_{\mathbb{R}^2}
Dq
D\phi\, dydz=\int_{S}D_{z}\phi\, dydz,\quad\forall \phi\in
W^1(\mathbb{R}^2)},\end{array}\right.\end{equation}
\begin{equation}\label{resequxx}\left\{\begin{array}{l}p_c\in W^1(\mathbb{R}^2),\\\\ \displaystyle{\int_{\mathbb{R}^2} Dp_c
D\phi\, dydz=\int_{S}cD\phi \,dydz,\quad\forall \phi\in
W^1(\mathbb{R}^2),}\end{array}\right.\end{equation} with
$c=(c_1,c_2)\in \mathbb{R}^2$,  admits a unique solution which
obviously depends on $S$. Then, we set
\begin{equation}\label{abc}\begin{array}{l}\displaystyle{ \alpha(S)= \int_{\mathbb{R}^2}\vert
Dp\vert^2dydz,\quad \beta(S)=\int_{\mathbb{R}^2}\vert
Dq\vert^2dydz, \quad \gamma(S)=2\int_{\mathbb{R}^2} DpDqdydz}
.\end{array}\end{equation}

In the sequel, we shall use the following  evident result.
\begin{Lemma}\label{pc} Let $p$ and $q$ be the unique solutions of
(\ref{resequxx1}) and (\ref{resequxx2}), respectively. Then, for
every $c=(c_1,c_2)\in \mathbb{R}^2$, the unique solution $p_c$ of
(\ref{resequxx}) is given by:
$$p_c=c_1p+c_2q. $$
\end{Lemma}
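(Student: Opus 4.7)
The plan is to simply verify that the candidate $c_1 p + c_2 q$ satisfies the variational identity defining $p_c$ in (\ref{resequxx}), and then invoke uniqueness of the solution to that problem, which was asserted immediately before the statement of the lemma. The whole argument rests on the linearity of the bilinear form (\ref{innprod}) on $W^1(\mathbb{R}^2)$ and on the linearity of the right-hand sides of (\ref{resequxx1}) and (\ref{resequxx2}) with respect to $\phi$.

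Concretely, I would first observe that $W^1(\mathbb{R}^2)$ is a vector space (it is a Hilbert space), so $c_1 p + c_2 q \in W^1(\mathbb{R}^2)$. Then, for an arbitrary test function $\phi \in W^1(\mathbb{R}^2)$, I would expand
\begin{equation}\nonumber
\int_{\mathbb{R}^2} D(c_1 p + c_2 q)\, D\phi\, dydz
= c_1 \int_{\mathbb{R}^2} Dp\, D\phi\, dydz + c_2 \int_{\mathbb{R}^2} Dq\, D\phi\, dydz,
\end{equation}
and apply (\ref{resequxx1}) and (\ref{resequxx2}) term by term to rewrite this as
\begin{equation}\nonumber
c_1 \int_S D_y \phi\, dydz + c_2 \int_S D_z \phi\, dydz
= \int_S (c_1 D_y \phi + c_2 D_z \phi)\, dydz
= \int_S c\, D\phi\, dydz,
\end{equation}
which is exactly the right-hand side of (\ref{resequxx}). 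Hence $c_1 p + c_2 q$ is a solution of (\ref{resequxx}), and by the uniqueness stated before the lemma the identification $p_c = c_1 p + c_2 q$ follows.

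There is really no serious obstacle here; the authors themselves label the lemma as \emph{evident}. The only mild subtlety is to make sure one works with representatives modulo constants in a way compatible with the quotient structure defining $W^1(\mathbb{R}^2)$, but since both sides of each identity involve only gradients, the choice of representative is immaterial and the verification above is well posed.
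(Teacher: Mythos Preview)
Your proof is correct and is exactly the direct verification the authors have in mind; the paper gives no explicit proof of this lemma, calling it an ``evident result'', and your linearity-plus-uniqueness argument is precisely what makes it evident.
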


We  recall the  Poincar\'e Lemma (which is well known if the
domain is bounded).
\begin{Lemma} Let $\xi\in \left(L^2(\mathbb{R}^2)\right)^2$ such that rot $\xi=0$.
Then, there exists a unique $w\in W^1(\mathbb{R}^2)$ such that
$\xi=Dw$.
\end{Lemma}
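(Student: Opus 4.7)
The plan is to prove uniqueness directly from the definition of $W^1(\mathbb{R}^2)$ and to establish existence by patching the classical bounded-domain Poincar\'e lemma on an exhausting sequence of balls via a uniform normalization. For uniqueness, if $w_1,w_2\in W^1(\mathbb{R}^2)$ both satisfy $Dw_i=\xi$, then $D(w_1-w_2)=0$ in $(L^2(\mathbb{R}^2))^2$, so $w_1-w_2$ is constant almost everywhere on $\mathbb{R}^2$; since $W^1(\mathbb{R}^2)$ is taken modulo constants, $w_1=w_2$ in $W^1(\mathbb{R}^2)$.

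For existence, I would exhaust $\mathbb{R}^2$ by the open balls $B_n=\{(y,z)\in\mathbb{R}^2:y^2+z^2<n^2\}$, $n\geq 1$. On each $B_n$ the restriction $\xi|_{B_n}\in (L^2(B_n))^2$ still satisfies $\mathrm{rot}\,\xi=0$ in $\mathcal{D}'(B_n)$. Since $B_n$ is bounded, Lipschitz and simply connected, the Poincar\'e lemma for curl-free $L^2$ vector fields (which I would obtain by mollifying $\xi$ on a slightly shrunk concentric ball, applying the smooth Poincar\'e lemma, and passing to the $H^1$-weak limit using Poincar\'e--Wirtinger to fix the additive constants) produces some $w_n\in H^1(B_n)$ with $Dw_n=\xi|_{B_n}$. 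I would then normalize by requiring $\int_{B_1}w_n\,dy\,dz=0$ for every $n\geq 1$. For $m\geq n$ the difference $w_m|_{B_n}-w_n$ has vanishing gradient on the connected open set $B_n$, hence is a constant, and this constant must be zero by the normalization on $B_1$; in particular $w_m|_{B_n}=w_n$. Finally, setting $w(y,z):=w_n(y,z)$ for any $n$ with $(y,z)\in B_n$ defines an unambiguous function on $\mathbb{R}^2$ with $w|_{B_n}=w_n\in H^1(B_n)$, so $w\in L^2_{\mathrm{loc}}(\mathbb{R}^2)$ and $Dw=\xi\in (L^2(\mathbb{R}^2))^2$, which is exactly $w\in W^1(\mathbb{R}^2)$.

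The main obstacle is the bounded-domain step: one has to invoke the Poincar\'e lemma for a vector field that is merely in $L^2$ rather than smooth. Mollification combined with the Poincar\'e--Wirtinger inequality (to convert an $L^2$-bound on the gradient into a uniform $L^2$-bound on the averaged primitives $w_n^\varepsilon - \mathchoice{{\textstyle -}\!\!\!\int}{-\!\!\!\int}{-\!\!\!\int}{-\!\!\!\int}_{B_n} w_n^\varepsilon$) is, in my view, the cleanest workaround. The simple-connectedness of $\mathbb{R}^2$ and of each $B_n$ is crucial both in that sublemma and in the compatibility argument that identifies $w_m$ with $w_n$ on the overlaps.
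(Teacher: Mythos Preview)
Your argument is correct, but it follows a genuinely different route from the paper. The paper's proof is two sentences of citation: first, the distributional Poincar\'e lemma from Schwartz (\emph{Th\'eorie des distributions}, Ch.~II, Th.~VI) gives a distribution $T\in\mathcal{D}'(\mathbb{R}^2)$ with $DT=\xi$, unique up to an additive constant; second, Kryloff's regularity theorem (Schwartz, Ch.~VI, Th.~XV) upgrades $T$ to $L^2_{\mathrm{loc}}(\mathbb{R}^2)$ from the fact that $DT=\xi\in(L^2(\mathbb{R}^2))^2$. That immediately places $T$ in $W^1(\mathbb{R}^2)$.

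Your approach trades these black boxes for an explicit construction: exhaustion by balls, the bounded simply-connected Poincar\'e lemma obtained through mollification and Poincar\'e--Wirtinger, and a normalization on $B_1$ that forces the local primitives to agree on overlaps. This is longer but more self-contained, relying only on standard real-analysis tools (mollifiers, weak compactness in $H^1$, connectedness) rather than on two nontrivial results from distribution theory. The paper's route is slicker if one is willing to quote Schwartz; yours is the natural choice if one wants to avoid that dependency, and it makes transparent where simple-connectedness is used. Both arrive at the same $w$, and your uniqueness argument is identical in spirit to the paper's ``unique up to a constant'' remark.
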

\begin{proof} The fact that rot $\xi=0$ provides the existence of
$T\in {\cal D}'(\mathbb{R}^2)$  such that $\xi=DT$, and $T$ is
unique up to a constant (see \cite{Sc}, Ch. II, Th. VI, page 59).
On the other hand, since $\xi\in
\left(L^2(\mathbb{R}^2)\right)^2$, Kryloff Theorem assures that
$T\in L^2_{loc}(\mathbb{R}^2)$ (see \cite{Sc}, Ch. VI, Th. XV,
page 181).\end{proof}

The following result was suggested by F. Murat \cite{M}.
\begin{Proposition}\label{Murat} Let $u\in L^2_{loc}(\mathbb{R}^2)$ be such that
$Du\in \left(L^2(\mathbb{R}^2)\right)^2$. Then, there exist a
sequence $\{\varphi_n\}_{n\in \mathbb{N}}\subset
C_0^\infty(\mathbb{R}^2)$ such that $D\varphi_n\rightarrow Du$
strongly in $\left(L^2(\mathbb{R}^2)\right)^2$.
\end{Proposition}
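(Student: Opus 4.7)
My plan is to implement the classical Deny-Lions density argument: $C_0^\infty(\mathbb{R}^2)$ is dense, modulo constants, in the Beppo-Levi space. Since only the gradient matters, I may freely replace $u$ by $u+c$. By standard mollification, $u*\rho_\varepsilon \in C^\infty(\mathbb{R}^2)\cap L^2_{\mathrm{loc}}(\mathbb{R}^2)$ has gradient $Du*\rho_\varepsilon \to Du$ strongly in $L^2(\mathbb{R}^2)^2$, so it suffices to treat the case of smooth $u$.

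The key tool is the two-dimensional \emph{logarithmic cutoff}. For $n\ge 2$ set $\eta_n(x) = \chi(\log|x|/\log n)$, where $\chi\in C^\infty(\mathbb{R})$ satisfies $\chi=1$ on $(-\infty,1]$ and $\chi=0$ on $[2,\infty)$. Then $\eta_n\in C_0^\infty(\mathbb{R}^2)$, $\eta_n=1$ on $B_n$, $\mathrm{supp}\,\eta_n\subset\overline{B_{n^2}}$, and $|D\eta_n(x)|\le C/(|x|\log n)$ on the annulus $A_n:=B_{n^2}\setminus\overline{B_n}$. Choose a constant $c_n$ (specified below) and put $\varphi_n = \eta_n(u - c_n)$, which is smooth with compact support in $\overline{B_{n^2}}$; a diagonal mollification at scale $\varepsilon_n\downarrow 0$ then produces the $C_0^\infty$ sequence claimed in the statement.

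Writing $D\varphi_n = \eta_n\,Du + (u-c_n)D\eta_n$, the first term converges to $Du$ in $L^2(\mathbb{R}^2)^2$ by Lebesgue's dominated convergence. For the second I would establish
\begin{equation*}
\int_{\mathbb{R}^2}|u-c_n|^2\,|D\eta_n|^2\,dx \;\le\; \frac{C}{(\log n)^2}\int_{A_n}\frac{|u-c_n|^2}{|x|^2}\,dx \;\le\; C\int_{A_n}|Du|^2\,dx\;\longrightarrow\;0,
\end{equation*}
where the limit holds since $A_n\subset\{|x|>n\}$ and $Du\in L^2(\mathbb{R}^2)^2$.

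The only substantive point, and the only step where dimension two enters essentially, is the middle inequality $\int_{A_n}|u-c_n|^2/|x|^2\,dx \le C(\log n)^2\int_{A_n}|Du|^2\,dx$. I would prove it via the log-polar change of variables $x=(e^t\cos\theta,e^t\sin\theta)$: in two dimensions the Dirichlet integral is conformally invariant, so setting $v(t,\theta)=u(e^t\cos\theta,e^t\sin\theta)$ and $R:=[\log n,2\log n]\times[0,2\pi]$ gives $\int_{A_n}|Du|^2\,dx = \int_R(|v_t|^2+|v_\theta|^2)\,dt\,d\theta$, while the weighted left-hand side becomes simply $\int_R|v-c_n|^2\,dt\,d\theta$; taking $c_n$ as the mean of $v$ over $R$, the Poincaré-Wirtinger inequality on $R$ (diameter $\sim\log n$) supplies the $(\log n)^2$. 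A polynomial cutoff would fail here, because $Du\in L^2(\mathbb{R}^2)^2$ does not force $u\in L^2(\mathbb{R}^2)$ (for instance $u(x)=\log\log|x|$ for large $|x|$); only the logarithmic decay $|D\eta_n|\lesssim 1/(|x|\log n)$, combined with the conformal Poincaré estimate, is able to kill the error $(u-c_n)D\eta_n$.
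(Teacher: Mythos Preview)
The paper does not supply a proof of this proposition: it is stated with the attribution ``The following result was suggested by F.\ Murat'' and the reference \cite{M} is a private communication. There is therefore no argument in the paper to compare yours against.

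Your proof is correct and is exactly the classical Deny--Lions density argument in dimension two (see \cite{DeLi}). The mollification step reduces to smooth $u$, and the logarithmic cutoff $\eta_n$ together with the conformal Poincar\'e--Wirtinger estimate on the annulus handles the error term $(u-c_n)D\eta_n$ precisely as you describe; the computation in log-polar coordinates is clean and the use of the conformal invariance of the Dirichlet integral in $\mathbb{R}^2$ is the right way to see why the $(\log n)^2$ factor appears. One cosmetic remark: once you have reduced to smooth $u$, the function $\varphi_n=\eta_n(u-c_n)$ already lies in $C_0^\infty(\mathbb{R}^2)$, so the additional ``diagonal mollification at scale $\varepsilon_n$'' you mention is unnecessary (though harmless).
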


For sake of completeness, we conclude this section giving another
representation of $W^1(\mathbb{R}^2)$. There exists a constant
$c>0$, and for every $\phi\in W^1(\mathbb{R}^2)$ there exists
$\overline{\phi} \in \phi$ (we recall that $\phi$ denotes a class
of equivalence) such that (see \cite{Ku}, Th. 6.3)
$$\int_{\mathbb{R}^2}\frac{
\overline{\phi}^2}{\left(1+\log\sqrt{\vert
x\vert^2+1}\right)^2(\vert x\vert^2+1)}dxdy\leq c
\int_{\mathbb{R}^2}\vert D \phi\vert^2dxdy.$$ Consequently, it
results that
\begin{equation}\nonumber W^1(\mathbb{R}^2)=\left\{ \phi\in
L^2_{\hbox{loc}}(\mathbb{R}^2): \frac{
\phi}{\left(1+\log\sqrt{\vert x\vert^2+1}\right)\sqrt{\vert
x\vert^2+1}}\in L^2(\mathbb{R}^2), \quad D\phi\in
\left(L^2(\mathbb{R}^2)\right)^2\right\}/\mathbb{R}\end{equation}
equipped with the inner product in (\ref{innprod}). About this
question see also \cite{Le}.

\section{Wire - thin film\label{w-f}}

This section is devoted to study the asymptotic behavior, as $n$
diverges, of problem (\ref{proiniz}) in the first case, that is
the case  wire - thin film.

\subsection{The setting of the problem  }
 Let $\Theta
\subset\mathbb ]0,1[^2$ be an open connected  set with smooth
boundary and,  for every  $n \in \mathbb N$, let
$\Omega_n^a=h_n\Theta\times [0,1[$,  $\Omega_n^b=\Theta\times
]-h^2_n,0[$  and
 $\Omega_n=\Omega_n^a\cup\Omega_n^b$ (see Fig. 1).
%\begin{figure}[h]
%\centering
%\includegraphics[width=7cm]{disegnorevised.eps}
% \label{domain1}
%\end{figure}

 Let $B=]-2, 2[^3$,  and  set
\begin{equation}\label{calU} {\cal U}=\left\{U\in L^1_{loc}(\mathbb{R}^3)\,:\,U\in L^2(B), \,\, DU\in
(L^2(\mathbb{R}^3))^3,\,\,\int_{B}Udx=0 \right\}.\end{equation} It
is easy to prove that ${\cal U}$ is contained in $
L^2_{loc}(\mathbb{R}^3)$ and it is a Hilbert space with the inner
product: $(U,V)= \int_{\mathbb{R}^3}DUDVdx+\int_{B}UVdx$.
Moreover, it follows from the Poincar\'e-Wirtinger inequality that
a norm on ${\cal U}$ equivalent to $(U,U)^{\frac{1}{2}}$ is given
by $\left(\int_{\mathbb{R}^3}\vert
DU\vert^2dx\right)^{\frac{1}{2}}$. Then, Lax-Milgram theorem
provides that, for  $M\in L^2(\Omega_n, \mathbb{R}^3)$, the
following equation
\begin{equation}\label{resequ}\left\{\begin{array}{l}U_{M,n}\in {\cal U},\\\\ \displaystyle{\int_{\mathbb{R}^3} DU_{M,n}DU
dx=\int_{\Omega_n}{M}DU dx,\quad\forall U\in {\cal
U,}}\end{array}\right.\end{equation}
 admits a unique solution  and $U_{M,n}$ is characterized as
 the unique minimizer of the following problem
\begin{equation}\label{resfun}\min\left\{\frac{1}{2}\int_{\mathbb{R}^3}\vert
DU-{M}\vert^2dx:U\in {\cal U}\right\},\end{equation} where it is
understood that $M=0$ in $\mathbb{R}^3\setminus\Omega_n$.
Moreover, $U_{M}$ belongs to $H^1(\mathbb{R}^3)$ up to an additive
constant (see \cite{JaKi}).

 Let $\lambda$ be a
positive constant, $\varphi: S^2\rightarrow[0,+\infty[$ be a
continuous, even function and, for every $n\in \mathbb{N}$,
$F_n\in L^2(\Omega_n, \mathbb R^3)$. The following problem:
\begin{equation}\label{minoriginale}\min\left\{
\int_{\Omega_n}\left(\lambda\vert
DM\vert^2+\varphi(M)+\frac{1}{2}DU_{M,n} M-2F_nM\right)dx:M\in
H^1(\Omega_n,S^2)\right\}\end{equation} has at least one solution
(see \cite{V}). In general, one can not expect a unique solution,
because of the non-convexity of the constraint $M(x)\in S^2$. The
aim of this section is  to study the asymptotic behavior, as $n$
diverges, of problem (\ref{minoriginale}).

\subsection{The rescaled problem }

By setting
\begin{equation}\nonumber\left\{\begin{array}{ll}
\mathbb{R}^3_a=\{(x_1,x_2,x_3)\in \mathbb{R}^3: x_3>0\},\\\\
\mathbb{R}^3_{b}=\{(x_1,x_2,x_3)\in \mathbb{R}^3:
 x_3< 0\},\end{array}\right.\end{equation}
For every $n\in \mathbb{N}$, problem (\ref{minoriginale}) will be
reformulated
 on a fixed domain through the following rescaling:
\begin{equation}\nonumber(x_1,x_2,x_3)\in \mathbb{R}^3\rightarrow \left\{\begin{array}{ll}(h_nx_1,h_nx_2,x_3), \hbox{ if }(x_1,x_2,x_3)\in
\mathbb{R}^3_a,\\\\(x_1,x_2,h_n^2x_3), \hbox{ if }
(x_1,x_2,x_3)\in \mathbb{R}^3_{b}.
\end{array}\right.\end{equation}
Namely, setting
$$\Omega^a=\Theta \times ]0,1[,\quad
\Omega^b=\Theta\times ]-1,0[,$$ and
$$ B^a_n=\left]-\frac{2 }{h_n}, \frac{2
}{h_n}\right[^2\times ]0,2[,\quad B^b_n= ]-2, 2[^2\times
\left]-\frac{2}{h^2_n},0\right[,\quad\forall n\in \mathbb{N},$$
the space ${\cal U}$ defined in (\ref{calU}) is rescaled in the
following
\begin{equation}\label{spazioriscalatoU}\begin{array}{ll} {\cal U}_n= \big\{&u=(u^a, u^b)
\in L^1_{loc} (\overline{\mathbb{R}^3_a})\times L^1_{loc}
(\overline{\mathbb{R}^3_b})\,\,:\,\, (u^a_{|_{B^a_n}},
u^b_{|_{B^b_n}}) \in L^2 (B^a_n)\times L^2 (B^b_n),\\\\&
(Du^a,Du^b)\in
(L^2(\mathbb{R}^3_a))^3\times (L^2(\mathbb{R}^3_b))^3,\quad\displaystyle{\int_{B^a_n}u^adx+\int_{B^b_n}u^bdx=0,}\\\\
& u^a(x_1,x_2,0)= u^b(h_nx_1,h_nx_2,0),\hbox{ for }(x_1,x_2)
\hbox{ a.e. in }\mathbb{R}^2 \big\}.\end{array}
\end{equation} Then,  for every
$\underline{m}=(\underline{m}^a,\underline{m}^b)\in
L^2(\Omega^a,\mathbb{R}^3)\times L^2(\Omega^b,\mathbb{R}^3)$, the
following equation
\begin{equation}\label{equariscalata}\left\{\begin{array}{l}u_{\underline{m},n}=(u_{\underline{m},n}^a,u_{\underline{m},n}^b)\in {\cal U}_n,\\\\
 \displaystyle{\int_{\mathbb{R}^3_a}
\left(\frac{1}{h_n}D_{x_1}u_{\underline{m},n}^a,\frac{1}{h_n}D_{x_2}u_{\underline{m},n}^a,
D_{x_3}u_{\underline{m},n}^a\right)\left(\frac{1}{h_n}D_{x_1}u^a,\frac{1}{h_n}D_{x_2}u^a,
D_{x_3}u^a\right) dx+}\\\\
 \displaystyle{\int_{\mathbb{R}^3_b}
\left(D_{x_1}u_{\underline{m},n}^b,D_{x_2}u_{\underline{m},n}^b,
\frac{1}{h^2_n}D_{x_3}u_{\underline{m},n}^b\right)\left(D_{x_1}u^b,D_{x_2}u^b,
\frac{1}{h^2_n}D_{x_3}u^b\right)
dx=}\\\\\displaystyle{\int_{\Omega^a}\left(\frac{1}{h_n}D_{x_1}u^a,\frac{1}{h_n}D_{x_2}u^a,
D_{x_3}u^a\right){\underline{m}^a} dx+}\\\\
 \displaystyle{\int_{\Omega^b}
\left(D_{x_1}u^b,D_{x_2}u^b,
\frac{1}{h^2_n}D_{x_3}u^b\right){\underline{m}^b} dx,}\quad\forall
u=(u^a,u^b)\in {\cal U}_n,\end{array}\right.\end{equation}
 which  rescales  equation (\ref{resequ}), admits a unique
solution. Its solution
$u_{\underline{m},n}=(u_{\underline{m},n}^a,u_{\underline{m},n}^b)\in
{\cal U}_n $ is characterized as  the unique minimizer of the
following problem rescaling  problem (\ref{resfun}) after the
renormalization by $h^2_n$:
\begin{equation}\label{problemariscalatoj}\begin{array}{l}
j_{\underline{m},n}(u_{\underline{m},n})=\min\left
\{j_{\underline{m},n}(u): u\in {\cal U}_n
\right\},\end{array}\end{equation} where
\begin{equation}\label{funzionaleriscalatoj}\begin{array}{l}\displaystyle{j_{\underline{m},n}:u=(u^a,u^b)\in
{\cal
U}_n\longrightarrow\frac{1}{2}\int_{\mathbb{R}^3_a}\left\vert\left(\frac{1}{h_n}D_{x_1}u^a,\frac{1}{h_n}D_{x_2}u^a,
D_{x_3}u^a\right)-{\underline{m}^a}\right\vert^2
dx+}\\\\
\displaystyle{\frac{1}{2}
\int_{\mathbb{R}^3_b}\left\vert\left(D_{x_1}u^b,D_{x_2}u^b,\frac{1}{h^2_n}
D_{x_3}u^b\right)-{\underline{m}^b}\right\vert^2dx,}
\end{array}\end{equation}
understanding ${\underline{m}^a}=0$ in $\mathbb{R}^3_a\setminus
\Omega^a$ and ${\underline{m}^b}=0$ in $\mathbb{R}^3_b\setminus
\Omega^b$. We note that
$u_{\underline{m},n}=(u_{\underline{m},n}^a,u_{\underline{m},n}^b)$
belongs to $ H^1(\mathbb{R}^3_a)\times H^1(\mathbb{R}^3_b)$ up to
an additive constant.

For every $n\in \mathbb{N}$, $H^1(\Omega_n,S^2)$, $F_n\in
L^2(\Omega_n)$ and the functional involved in problem
(\ref{minoriginale}) renormalized by $h^2_n$ are rescaled in
\begin{equation}\label{spazioriscalatoM}\begin{array}{ll} {\cal M}_n= \Big\{&\underline{m}=(\underline{m}^a, \underline{m}^b)
\in H^1 (\Omega^a,S^2)\times H^1
(\Omega^b,S^2)\,:\\\\
 & \underline{m}^a(x_1,x_2,0)= \underline{m}^b(h_nx_1,h_nx_2,0),\hbox{ for }(x_1,x_2) \hbox{
a.e. in }\Theta \Big\},\end{array}
\end{equation}
\begin{equation}\label{riscalamentoforze}
f_n:x\in \Omega^a\cup \Omega^b\rightarrow f_n(x)=\left\{
\begin{array}{ll} f^{a}_n(x)=
 F_n(h_nx_1,h_nx_2,x_3),\quad \hbox{for }x\hbox{ a.e. in
}\Omega^a, \\\\
 f^{b}_n(x)= F_n(x_1,x_2,h^2_n x_3), \quad
\hbox{for }x\hbox{ a.e. in }\Omega^b,
\end{array}\right.
\end{equation}
and
\begin{equation}\label{funzionaleriscalatoE}\begin{array}{l}\displaystyle{E_n:\underline{m}=(\underline{m}^a,\underline{m}^b)\in
{\cal
M}_n\longrightarrow}\\\\\displaystyle{\int_{\Omega^a}\left(\lambda\left\vert\left(\frac{1}{h_n}D_{x_1}\underline{m}^a|
\frac{1}{h_n}D_{x_2}\underline{m}^a|
D_{x_3}\underline{m}^a\right)\right\vert^2+\varphi(\underline{m}^a)-2f^a_n\underline{m}^a
 \right)dx+}\\\\\displaystyle{\frac{1}{2}\int_{\Omega^a}\left(
\left(\frac{1}{h_n}D_{x_1}u^a_{\underline{m},n},\frac{1}{h_n}D_{x_2}u^a_{\underline{m},n},
D_{x_3}u^a_{\underline{m},n}\right)\underline{m}^a
 \right)dx+}\\\\
\displaystyle{
\int_{\Omega^b}\left(\lambda\left\vert\left(D_{x_1}\underline{m}^b|D_{x_2}\underline{m}^b|\frac{1}{h^2_n}
D_{x_3}\underline{m}^b\right)\right\vert^2+\varphi(\underline{m}^b)-2f^b_n\underline{m}^b\right)dx+}
\\\\
\displaystyle{\frac{1}{2}\int_{\Omega^b}\left(
\left(D_{x_1}u^b_{\underline{m},n},D_{x_2}u^b_{\underline{m},n},
\frac{1}{h^2_n}D_{x_3}u^b_{\underline{m},n}\right)\underline{m}^b\right)dx,}
\end{array}\end{equation}
respectively. Then,  the function defined by
\begin{equation}\nonumber
\begin{array}{ll}
 M_n(h_nx_1,h_nx_2,x_3),\hbox{ for }x\hbox{ a.e. in
}\Omega^a, \quad\quad
 M_n(x_1,x_2,h^2_n x_3),
\hbox{ for }x\hbox{ a.e. in }\Omega^b,
\end{array}
\end{equation}
with $M_n$ solution of problem (\ref{minoriginale}), is a
minimizer of the following problem:
\begin{equation}\label{problemariscalato}\begin{array}{l}
\min\left \{E_n(\underline{m}): \underline{m}\in {\cal M}_n
\right\}.
\end{array}\end{equation}
Actually, the goal becomes to study the asymptotic behavior, as
$n$ diverges, of problem (\ref{problemariscalato}). To this aim,
it will be assumed that
\begin{equation}\label{forze} f^{a}_n\rightharpoonup f^{a}\hbox{ weakly
in } L^2(\Omega^a, \mathbb R^3),\quad\quad f^{b}_n\rightharpoonup
f^{b}\hbox{ weakly in } L^2(\Omega^b,\mathbb R^3).\end{equation}

Note that,  setting for every $n\in \mathbb{N}$
\begin{equation}\label{Emag}\begin{array}{l}\displaystyle{E_n^{mag}:\underline{m}=(\underline{m}^a,\underline{m}^b)\in
L^2(\Omega^a,\mathbb{R}^3)\times
L^2(\Omega^b,\mathbb{R}^3)\longrightarrow}\\\\\displaystyle{\frac{1}{2}\int_{\mathbb{R}^3_a}
\left\vert\left(\frac{1}{h_n}D_{x_1}u^a_{\underline{m},n},\frac{1}{h_n}D_{x_2}u^a_{\underline{m},n},
D_{x_3}u^a_{\underline{m},n}\right)\right\vert^2dx+}\\\\
\displaystyle{\frac{1}{2} \int_{\mathbb{R}^3_b}\left\vert
\left(D_{x_1}u^b_{\underline{m},n},D_{x_2}u^b_{\underline{m},n},
\frac{1}{h^2_n}D_{x_3}u^b_{\underline{m},n}\right)\right\vert^2dx,}
\end{array}\end{equation}
by virtue of   (\ref{equariscalata}), functional $E_n$ can be
rewritten in the following way:
\begin{equation}\label{E+Emag}\begin{array}{l}E_n(\underline{m})=
\displaystyle{\int_{\Omega^a}\left(\lambda\left\vert\left(\frac{1}{h_n}D_{x_1}\underline{m}^a|\frac{1}{h_n}D_{x_2}\underline{m}^a|
D_{x_3}\underline{m}^a\right)\right\vert^2+\varphi(\underline{m}^a)-2f^a_n\underline{m}^a\right)dx+}\\\\
\displaystyle{
\int_{\Omega^b}\left(\lambda\left\vert\left(D_{x_1}\underline{m}^b|D_{x_2}\underline{m}^b|\frac{1}{h^2_n}
D_{x_3}\underline{m}^b\right)\right\vert^2+\varphi(\underline{m}^b)-2f^b_n\underline{m}^b\right)dx+}\\\\\displaystyle{E^{mag}_n(\underline{m}),\quad\forall
\underline{m}=(\underline{m}^a,\underline{m}^b)\in {\cal
M}_n,\quad\forall n\in \mathbb{N}}.
\end{array}\end{equation}

\subsection{The main result\label{ms} }
Let
\begin{equation}\label{limitspace}\begin{array}{ll}{\cal M}=\Big\{\mu=(\mu^a,\mu^b)\in H^1(\Omega^a,S^2)\times
H^1(\Omega^b,S^2)\,:\,
\mu^a\hbox{ is independent of } (x_1,x_2),\\\\
\quad\quad\mu^b\hbox{ is independent of } x_3\Big\}\simeq
H^1(]0,1[,S^2)\times H^1\left(\Theta,S^2\right),
\end{array}\end{equation}
\begin{equation}\label{average}\left\{\begin{array}{ll}F^a: x_3\in ]0,1[\longrightarrow{\displaystyle\frac{1}{\vert \Theta\vert}\int_{\Theta}
f^a(x_1,x_2,x_3)dx_1dx_2},\\\\
F^b:(x_1,x_2)\in \Theta\longrightarrow {\displaystyle\int_{-1}^{0}
f^b(x_1,x_2,x_3)dx_3},
\end{array}\right.\end{equation}
 and
\begin{equation}\label{funlimite}\begin{array}{l}\displaystyle{E:\mu=(\mu^a,\mu^b)=((\mu^a_1,\mu^a_2,\mu^a_3),(\mu^b_1,\mu^b_2,\mu^b_3))\in
{\cal M}\longrightarrow}\\\\\displaystyle{\vert
\Theta\vert\int_0^1\left(\lambda\left\vert
\frac{d\mu^a}{dx_3}\right\vert^2+\varphi(\mu^a)-2F^a\mu^a
 \right)dx_3+}\\\\\displaystyle{
\frac{1}{2}\Bigg(\alpha(\Theta)\int_0^1\vert\mu^a_1\vert^2dx_3+\beta(\Theta)\int_0^1\vert\mu^a_2\vert^2dx_3+\gamma(\Theta)\int_0^1\mu^a_1\mu^a_2dx_3\Bigg)+}\\\\
\displaystyle{\int_{\Theta}\left(\lambda\left\vert
D\mu^b\right\vert^2+ \varphi(\mu^b)+ \frac{1}{2}\vert
\mu^b_3\vert^2-2F^b\mu^b\right)dx_1dx_2,}
\end{array}\end{equation}
where  $\alpha(\Theta)$, $\beta(\Theta)$ and $\gamma(\Theta)$ are
defined by (\ref{abc}) with $S=\Theta$.

This section is devoted to prove the following main result:
\begin{Theorem}\label{ultimo}Assume  (\ref{forze}).
For every $n\in \mathbb{N}$, let
$\underline{m}_n=(\underline{m}_n^a,\underline{m}_n^b)$ be a
solution of (\ref{problemariscalato}) and  $u_n=(u_n^a,u_n^b)$ be
the unique solution of (\ref{problemariscalatoj}) corresponding to
$\underline{m}_n$. Moreover, let ${\cal M}$ and $E$ be defined by
(\ref{limitspace}) and (\ref{funlimite}), respectively. Then,
there exist an increasing sequence of positive integer numbers
$\{n_i\}_{i\in\mathbb N}$ and
$\widehat{\mu}=(\widehat{\mu}^a,\widehat{\mu}^b)=((\widehat{\mu}^a_1,\widehat{\mu}^a_2,\widehat{\mu}^a_3),
(\widehat{\mu}^b_1,\widehat{\mu}^b_2,\widehat{\mu}^b_3))\in{\cal
M}$, depending on the selected subsequence, such that
\begin{equation}\label{convergenzastrong}\begin{array}{l}
\underline{m}^{a}_{n_i}\rightarrow \widehat{\mu}^a\hbox{ stongly
in } H^1(\Omega^a,S^2),\quad\quad
\underline{m}^{b}_{n_i}\rightarrow\widehat{\mu}^b\hbox{ strongly
in } H^1(\Omega^b,S^2),\end{array}
\end{equation}
\begin{equation}\label{convergenzaDstrong}\left\{\begin{array}{ll}
\dfrac{1}{h_{n}}D_{x_1}\underline{m}^{a}_{n}\rightarrow0,\quad\dfrac{1}{h_{n}}D_{x_2}\underline{m}^{a}_{n}\rightarrow
0\hbox{ stongly in }
L^2(\Omega^a,\mathbb{R}^3),\\\\
\dfrac{1}{h^2_{n}}D_{x_3}\underline{m}^{b}_{n}\rightarrow 0\hbox{
strongly in } L^2(\Omega^b,\mathbb{R}^3),\end{array}\right.
\end{equation}
\begin{equation}\label{limcon}\left\{\begin{array}{llll}\dfrac{1}{h_{n_i}}D_{x_1}u_{n_i}^a\rightarrow
\widehat{\xi}^a_1,
 \quad  \dfrac{1}{h_{n_i}}D_{x_2}u_{n_i}^a\rightarrow\widehat{\xi}^a_2, \quad D_{x_3}u_n^a\rightarrow0\quad\hbox{ strongly in
 }L^2(\mathbb{R}^3_a),\\\\D_{x_1}u_n^b\rightarrow0,
 \quad D_{x_2}u_n^b\rightarrow0, \quad \dfrac{1}{h^2_{n_i}}D_{x_3}u_{n_i}^b\rightarrow{\widehat{\mu}_3^b}\quad\hbox{ strongly in
 }L^2(\mathbb{R}^3_b),
\end{array}\right.\end{equation}
as $n$ and $i$ diverge, where $\widehat{\mu}$ is a solution of the
following problem:
\begin{equation}\label{problemafinale}\begin{array}{l}
E(\widehat{\mu})=\min\left \{E(\mu): \mu\in {\cal M} \right\},
\end{array}\end{equation} and
\begin{equation}\label{idxiaiiz} (\widehat{\xi}_1^a,\widehat{\xi}_2^a)(x_1,x_2,x_3)=\left\{\begin{array}{ll} (0,0),
\hbox{ a.e. in } \mathbb{R}^2\times ]1,+\infty[, \\\\
\widehat{\mu}^a_1(x_3)Dp(x_1,x_2)+
\widehat{\mu}^a_2(x_3)Dq(x_1,x_2),\hbox{ a.e. in }
\mathbb{R}^2\times ]0,1[,\end{array}\right.\end{equation} with $p$
(resp. $q$) the unique solution of (\ref{resequxx1}) (resp.
(\ref{resequxx2})).
  It is understood that
      ${\widehat{\mu}_3^b}=0$ in $\mathbb{R}^3_b\setminus \Omega^b$. Moreover,
      the convergence of the energies holds true:
\begin{equation}\label{convenergie}\begin{array}{l}\displaystyle{
\lim_nE_n(\underline{m}_n)= E(\widehat{\mu}).}
\end{array}\end{equation}
\end{Theorem}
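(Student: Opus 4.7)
The strategy is a $\Gamma$-convergence argument built on the rescaled formulation (\ref{problemariscalato}), together with an \emph{a priori} compactness analysis. First, I would plug a constant $S^2$-valued magnetization (which lies in $\mathcal{M}_n$ for every $n$) into $E_n$ and use $\varphi\ge 0$, $E_n^{mag}\ge 0$, and the weak convergence (\ref{forze}) to get $E_n(\underline{m}_n)\le C$. Combined with the magnetostatic positivity and (\ref{E+Emag}), this bounds each of the rescaled exchange terms: $\tfrac{1}{h_n}D_{x_i}\underline{m}^a_n$ ($i=1,2$), $D_{x_3}\underline{m}^a_n$, $D_{x_i}\underline{m}^b_n$ ($i=1,2$), and $\tfrac{1}{h_n^2}D_{x_3}\underline{m}^b_n$ in $L^2$. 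Since the pointwise constraint $|\underline{m}_n|=1$ gives an $L^\infty$ bound, Rellich's theorem yields a subsequence with $\underline{m}^a_{n_i}\to \widehat{\mu}^a$ and $\underline{m}^b_{n_i}\to \widehat{\mu}^b$ strongly in $L^2$ and weakly in $H^1$, with $\widehat\mu\in H^1(\Omega^a,S^2)\times H^1(\Omega^b,S^2)$. The bounds on $\tfrac{1}{h_n}D_{x_i}\underline{m}^a_n$ and $\tfrac{1}{h_n^2}D_{x_3}\underline{m}^b_n$ force $D_{x_i}\widehat{\mu}^a=0$ ($i=1,2$) and $D_{x_3}\widehat{\mu}^b=0$, so $\widehat\mu\in\mathcal M$.

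The main obstacle is the analysis of the magnetostatic term $E_n^{mag}$ in (\ref{Emag}), and in particular isolating the coefficients $\alpha(\Theta),\beta(\Theta),\gamma(\Theta)$ on the wire. The film part is standard after Gioia--James: choosing test functions $u^b$ independent of $(x_1,x_2)$ in (\ref{equariscalata}) and using $L^2$ bounds shows $D_{x_i}u^b_n\to 0$ ($i=1,2$) while $\tfrac{1}{h_n^2}D_{x_3}u^b_n\to \widehat\mu^b_3$ strongly in $L^2(\mathbb{R}^3_b)$, yielding the $\tfrac12|\mu^b_3|^2$ term. For the wire I would stretch the cross-section by $h_n$: writing $\widetilde u^a_n(y_1,y_2,x_3)=\tfrac{1}{h_n}u^a_n(h_ny_1,h_ny_2,x_3)$, the rescaled equation on $\mathbb{R}^3_a$ shows that $D_{y_1}\widetilde u^a_n, D_{y_2}\widetilde u^a_n$ are bounded in $L^2$ while $D_{x_3}\widetilde u^a_n\to 0$. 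Testing with functions of the form $\phi(y_1,y_2)\psi(x_3)$ and integrating over $x_3$-slices, the limit two-dimensional equation on each slice $x_3\in\,]0,1[$ reads
\begin{equation}\nonumber
\int_{\mathbb{R}^2}D\widetilde u^a\,D\phi\,dy_1dy_2=\int_\Theta\bigl(\widehat\mu^a_1(x_3)D_{y_1}\phi+\widehat\mu^a_2(x_3)D_{y_2}\phi\bigr)dy_1dy_2,
\end{equation}
which by Lemma \ref{pc} is solved by $\widetilde u^a=\widehat\mu^a_1(x_3)\,p+\widehat\mu^a_2(x_3)\,q$ on each slice. Passing back to the original scaling gives (\ref{idxiaiiz}), and computing the limit of the wire piece of $E_n^{mag}$ via (\ref{abcstarrrr}) produces exactly $\tfrac12(\alpha\|\mu^a_1\|^2+\beta\|\mu^a_2\|^2+\gamma\int\mu^a_1\mu^a_2)$.

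With the magnetostatic analysis in hand, the $\liminf$ inequality $E(\widehat\mu)\le\liminf E_n(\underline{m}_n)$ follows by weak lower semicontinuity of the exchange norm, continuity of $\varphi$ together with $L^2$ strong convergence, the weak-strong pairing (\ref{forze}) for the external field (noting $f^a_n\underline{m}^a_n\rightharpoonup\tfrac{1}{|\Theta|}F^a\widehat\mu^a$ by the $(x_1,x_2)$-independence of the limit), and the magnetostatic identification above. For the $\limsup$ I would construct a recovery sequence $\mu\mapsto\underline{m}_n$: given $\mu\in\mathcal M$, take $\underline{m}_n(x)=\mu^a(x_3)$ on $\Omega^a$ and $\underline{m}_n(x)=\mu^b(x_1,x_2)$ on $\Omega^b$, then correct near the junction $\{x_3=0\}\cap\{(x_1,x_2)\in h_n\Theta\}$ using the density result from \cite{GaH1} to satisfy the compatibility condition in $\mathcal M_n$ at cost $o(1)$; since the volume of the junction region scales as $h_n^2$ times a vanishing factor, the correction contributes nothing to the limit energy, and direct substitution in (\ref{funzionaleriscalatoE}) together with the magnetostatic identification gives $E_n(\underline m_n)\to E(\mu)$.

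Finally, combining $\liminf$ and $\limsup$ with the minimality of $\underline{m}_n$ at level $n$ shows that $\widehat\mu$ minimizes $E$ on $\mathcal M$ and that $E_n(\underline{m}_n)\to E(\widehat\mu)$, which is (\ref{convenergie}) and (\ref{problemafinale}). To upgrade $H^1$-weak to $H^1$-strong convergence in (\ref{convergenzastrong}), I would subtract: since each term in $E_n$ converges to its counterpart in $E$, the exchange norms $\int_{\Omega^a}|D_{x_3}\underline m^a_n|^2$ and $\int_{\Omega^b}|D_{x_1,x_2}\underline m^b_n|^2$ must converge to the limit norms, giving strong convergence of those derivatives; the remaining scaled derivatives $\tfrac{1}{h_n}D_{x_i}\underline m^a_n$, $\tfrac{1}{h_n^2}D_{x_3}\underline m^b_n$ must tend to zero strongly (their weak limits are zero and their $L^2$-norms vanish in the limit), yielding (\ref{convergenzaDstrong}). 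The strong convergence in (\ref{limcon}) of the potential derivatives similarly follows from convergence of $E_n^{mag}$ to the limiting quadratic form. The delicate step is the wire magnetostatic identification and the recovery sequence at the junction, where the mismatch $\underline m^a_n(x_1,x_2,0)=\underline m^b_n(h_nx_1,h_nx_2,0)$ between the two rescalings forces the use of the density construction from \cite{GaH1}.
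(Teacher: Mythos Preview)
Your overall strategy—a priori bounds from a constant competitor, compactness, identification of the magnetostatic limit, $\liminf$/$\limsup$ via the recovery sequence from \cite{GaH1}, and the upgrade to strong convergence by matching energies—is exactly the paper's approach, and the recovery-sequence step with the junction correction and the density argument in $\mathcal M$ is handled precisely as you describe.

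The one point where your sketch diverges from the paper is the wire magnetostatic identification. Your ``stretching'' $\widetilde u^a_n(y,x_3)=\tfrac{1}{h_n}u^a_n(h_ny,x_3)$ gives $\|D_{x_3}\widetilde u^a_n\|_{L^2}^2=h_n^{-4}\|D_{x_3}u^a_n\|_{L^2}^2$, which blows up, so the claim $D_{x_3}\widetilde u^a_n\to 0$ is false as stated; what is true is that the $x_3$-term in the equation carries an extra $h_n^2$ factor and therefore disappears in the limit. The paper avoids this bookkeeping entirely: it works directly with the weak limits $\xi^a_i$ of $\tfrac{1}{h_n}D_{x_i}u^a_n$ in $L^2(\mathbb{R}^3_a)$, observes that $\operatorname{rot}_{(x_1,x_2)}(\xi^a_1,\xi^a_2)=0$ slice by slice, invokes the Poincar\'e Lemma in the Beppo--Levi space $W^1(\mathbb{R}^2)$ to write $(\xi^a_1,\xi^a_2)=Dw(\cdot,\cdot,x_3)$, and then passes to the limit in (\ref{equariscalata}) (multiplied by $h_n$) to obtain the two-dimensional equation (\ref{17giugno}) on each slice, which Lemma~\ref{pc} identifies as $w=\mu^a_1 p+\mu^a_2 q$. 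Either route lands on (\ref{idxiaiiz}); the paper's is cleaner because it never leaves the fixed domain.
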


\subsection{A Convergence result for  the magnetostatic
energy \label{magpot}}
\begin{Proposition}\label{limvarprin}
Let
$\{\underline{m}_n=(\underline{m}_n^a,\underline{m}_n^b)\}_{n\in\mathbb{
N}}\subset L^2(\Omega^a,\mathbb{R}^3)\times
L^2(\Omega^b,\mathbb{R}^3)$ and
$\mu=(\mu^a,\mu^b)=((\mu^a_1,\mu^a_2,\mu^a_3,),(\mu^b_1,\mu^b_2,\mu^b_3))\in
L^2(\Omega^a,\mathbb{R}^3)\times L^2(\Omega^b,\mathbb{R}^3)$ be
such that $\mu^a$ is independent of $(x_1,x_2)$,  $\mu^b$ is
independent of $x_3$ and
\begin{equation}\label{mnconv}\underline{m}_n^a\rightarrow \mu^a \hbox{ strongly in }L^2(\Omega^a,\mathbb{R}^3),\quad\quad\underline{m}_n^b\rightarrow
\mu^b\hbox{ strongly in }
L^2(\Omega^b,\mathbb{R}^3),\end{equation} as $n$ diverges.
Moreover, for every $n\in\mathbb{ N}$, let $u_n=(u_n^a,u_n^b)$ be
the unique solution of (\ref{problemariscalatoj}) corresponding to
$\underline{m}_n$, and let $E_n^{mag}$ be defined by (\ref{Emag}).
Then, it results that
\begin{equation}\label{conlimprin}\left\{\begin{array}{llll}\dfrac{1}{h_n}D_{x_1}u_n^a\rightarrow
\xi^a_1,
 \quad  \dfrac{1}{h_n}D_{x_2}u_n^a\rightarrow\xi^a_2, \quad D_{x_3}u_n^a\rightarrow0\quad\hbox{ strongly in
 }L^2(\mathbb{R}^3_a),\\\\D_{x_1}u_n^b\rightarrow0,
 \quad D_{x_2}u_n^b\rightarrow0, \quad \dfrac{1}{h^2_n}D_{x_3}u_n^b\rightarrow{\mu_3^b}\quad\hbox{ strongly in
 }L^2(\mathbb{R}^3_b),
\end{array}\right.\end{equation}
as $n$ diverges, where it is understood that
      ${\mu}_3^b=0$ in $\mathbb{R}^3_b\setminus \Omega^b$, and
\begin{equation}\label{idxiaiiz} (\xi_1^a,\xi^a_2)(x_1,x_2,x_3)=\left\{\begin{array}{ll} (0,0),
\hbox{ a.e. in } \mathbb{R}^2\times ]1,+\infty[, \\\\
\mu^a_1(x_3)Dp(x_1,x_2)+ \mu^a_2(x_3)Dq(x_1,x_2),\hbox{ a.e. in }
\mathbb{R}^2\times ]0,1[,\end{array}\right.\end{equation} with $p$
(resp. $q$) the unique solution of (\ref{resequxx1}) (resp.
(\ref{resequxx2})). Furthermore, one has that
\begin{equation}\label{conEmag}\begin{array}{l}\displaystyle{\lim_nE_n^{mag}(\underline{m}_n)=\frac{1}{2}\left(\int_{\mathbb{R}^2\times]0,1[}\vert
\mu^a_1Dp+ \mu^a_2Dq\vert^2dx +
\int_{\Theta}\vert\mu^b_3\vert^2dx_3\right)=}\\\\\displaystyle{
\frac{1}{2}\Bigg(\alpha(\Theta)\int_0^1\vert\mu^a_1\vert^2dx_3+\beta(\Theta)\int_0^1\vert\mu^a_2\vert^2dx_3+
\gamma(\Theta)\int_0^1\mu^a_1\mu^a_2dx_3+\int_{\Theta}\vert
\mu_3^b\vert^2dx_1dx_2\Bigg),}
\end{array}\end{equation}
where  $\alpha(\Theta)$, $\beta(\Theta)$ and $\gamma(\Theta)$ are
defined by (\ref{abc}) with $S=\Theta$.
\end{Proposition}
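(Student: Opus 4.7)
My plan is to combine an a priori $L^2$ bound, a distributional curl argument for the trivial components, slicewise identification via carefully scaled test functions, and a norm-convergence upgrade to strong convergence.

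First, I~would test (\ref{equariscalata}) with $u=u_n$; Cauchy-Schwarz against the $L^2$-boundedness of $\underline{m}_n$ controls all six rescaled derivatives in $L^2$, so up to a subsequence they converge weakly to $\xi_i^a\in L^2(\mathbb{R}^3_a)$ and $\xi_i^b\in L^2(\mathbb{R}^3_b)$ ($i=1,2,3$). Three of these limits must vanish for structural reasons: the commutation $D_{x_j}D_{x_3}u_n^a = D_{x_3}D_{x_j}u_n^a$ translates into $D_{x_j}\xi_{n,3}^a = h_n D_{x_3}\xi_{n,j}^a$ ($j=1,2$), which passes to the weak $\mathcal{D}'$-limit as $D_{x_j}\xi_3^a = 0$; so $\xi_3^a$ depends only on $x_3$ and, being $L^2$ on $\mathbb{R}^2\times]0,\infty[$, must vanish. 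Analogous identities on $\mathbb{R}^3_b$ show $\xi_1^b,\xi_2^b$ are independent of $x_3$, hence zero by the same $L^2$-argument.

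Next, for the wire I~plan to plug in $u=(u^a,0)$ with $u^a(x_1,x_2,x_3) = h_n\phi(x_1,x_2)\psi(x_3)$, $\phi\in C_0^\infty(\mathbb{R}^2)$ and $\psi\in C_c^\infty(]0,\infty[)$ (so the interface condition in $\mathcal{U}_n$ holds; the mean-zero constraint is immaterial since constants do not alter (\ref{equariscalata})). The prefactor $h_n$ exactly cancels the $1/h_n$ in the transverse derivatives, while the $D_{x_3}$-contribution retains a residual factor $h_n$ and drops out. Passing $n\to\infty$ and using the strong $L^2$-convergence of $\underline{m}_n^a$ yields, for a.e.\ $x_3\in]0,\infty[$,
\begin{equation*}
\int_{\mathbb{R}^2}\bigl(\xi_1^a D_{x_1}\phi + \xi_2^a D_{x_2}\phi\bigr)dx_1dx_2 = \chi_{]0,1[}(x_3)\int_\Theta\bigl(\mu_1^a(x_3) D_{x_1}\phi + \mu_2^a(x_3) D_{x_2}\phi\bigr)dx_1dx_2 .
\end{equation*}
Proposition~\ref{Murat} extends this to $\phi\in W^1(\mathbb{R}^2)$; the 2D Poincar\'e lemma represents $(\xi_1^a,\xi_2^a)$ as $DV$ with $V(\cdot,\cdot,x_3)\in W^1(\mathbb{R}^2)$ solving (\ref{resequxx}) for $c=(\mu_1^a(x_3),\mu_2^a(x_3))$, and Lemma~\ref{pc} gives $V=\mu_1^a p+\mu_2^a q$, yielding (\ref{idxiaiiz}); uniqueness gives vanishing on $\mathbb{R}^2\times]1,\infty[$. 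For the film I~symmetrically insert $u=(0,h_n^2\phi(x_1,x_2)\psi(x_3))$ with $\psi\in C_c^\infty(]-\infty,0[)$; the transverse terms now carry $h_n^2$ and vanish, leaving $\int_{\mathbb{R}^3_b}\xi_3^b\phi\psi'dx=\int_{\Omega^b}\phi\psi'\mu_3^b\,dx$, whence $\xi_3^b=\chi_{\Omega^b}\mu_3^b$.

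Lastly, testing (\ref{equariscalata}) once more with $u_n$ itself gives
\begin{equation*}
2 E_n^{\,\mathrm{mag}}(\underline{m}_n) = \int_{\Omega^a}\xi_n^a\cdot\underline{m}_n^a\,dx + \int_{\Omega^b}\xi_n^b\cdot\underline{m}_n^b\,dx,
\end{equation*}
and the weak-strong product passes to the limit. Substituting the explicit $\xi$ and invoking the identities $\int_\Theta D_{x_1}p = \alpha(\Theta)$, $\int_\Theta D_{x_2}q = \beta(\Theta)$, $\int_\Theta D_{x_1}q + \int_\Theta D_{x_2}p = \gamma(\Theta)$ (each obtained by testing (\ref{resequxx1})--(\ref{resequxx2}) against $p$ and $q$) reconstructs exactly the quadratic form in (\ref{conEmag}). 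This norm convergence $\|\xi_n\|_{L^2}^2\to\|\xi\|_{L^2}^2$, combined with the weak convergences and component-by-component lower semicontinuity (which forces equality on each piece), promotes each weak convergence to strong, giving (\ref{conlimprin}). The main obstacle I~anticipate is the wire identification: the slicewise limit naturally lives in the Beppo-Levi space $W^1(\mathbb{R}^2)$, which is not closed in $H^1$, so Proposition~\ref{Murat} is essential in passing from the compactly supported test functions available in (\ref{equariscalata}) to the $W^1(\mathbb{R}^2)$-test functions needed to invoke Lemma~\ref{pc}.
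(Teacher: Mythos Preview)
Your proposal is correct and follows essentially the same strategy as the paper: a priori bounds from testing with $u_n$, identification of the weak limits via the curl/Poincar\'e lemma and scaled test functions in (\ref{equariscalata}), and upgrade to strong convergence through convergence of norms. The only cosmetic differences are that the paper cites \cite{GaHa3} for the vanishing of $D_{x_3}u_n^a$ and $D_{x_1}u_n^b,\,D_{x_2}u_n^b$ where you give the direct commutator argument, and that the paper handles the mean-zero constraint in $\mathcal{U}_n$ by adding explicit constants $c_n$ rather than noting (as you do) that constants drop from (\ref{equariscalata}).
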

\begin{proof} The proof will be developed in four steps.

By arguing as in the first part of the proof of proposition 5.1 in
\cite{GaHa3}, one can proves that
\begin{equation}\label{conunL60}\begin{array}{l}
Du_n^a\rightharpoonup 0 \hbox{ weakly in
}(L^2(\mathbb{R}^3_a))^3,\quad Du_n^b\rightharpoonup 0 \hbox{
weakly in }(L^2(\mathbb{R}^3_b))^3,\end{array}\end{equation} as
$n$ diverges, and  that there exist  $\xi^a=(\xi^a_1,\xi^a_2)\in
\left(L^2(\mathbb{R}^3_a)\right)^2$ and $\xi^b\in
L^2(\mathbb{R}^3_b)$ such that, on extraction of a suitable
 subsequence (not relabeled),
\begin{equation}\label{condu}\left\{\begin{array}{l}\displaystyle{\frac{1}{h_n}D_{x_1}u_n^a\rightharpoonup \xi^a_1\hbox{ weakly in
}L^2(\mathbb{R}^3_a),\quad
\frac{1}{h_n}D_{x_2}u_n^a\rightharpoonup \xi^a_2\hbox{ weakly in
}L^2(\mathbb{R}^3_a),}\\\\
\displaystyle{\frac{1}{h^2_n}D_{x_3}u_n^b\rightharpoonup
\xi^b\hbox{ weakly in
}L^2(\mathbb{R}^3_b),}\end{array}\right.\end{equation} as $n$
diverges.

The second step is devoted to identify  $\xi^a$. To this aim,
starting from the following evident relation:
$$ D_{x_2}\left( \frac{1}{h_n}D_{x_1}u_n^a\right)=D_{x_1}\left( \frac{1}{h_n}D_{x_2}u_n^a\right)
\hbox{ in }{\cal D}'(\mathbb{R}^3_a),\quad\forall n \in
\mathbb{N},$$  and using the first two limits in (\ref{condu}),
one obtains that
\begin{equation}\label{D'}\displaystyle{\int_{\mathbb{R}^3_a}\xi_1^aD_{x_2}\varphi dx=
\int_{\mathbb{R}^3_a}\xi_2^aD_{x_1}\varphi dx, \quad\forall
\varphi\in H_0^1(\mathbb{R}^3_a).}\end{equation} By taking
$\varphi(x)=\phi(x_1,x_2)\chi(x_3)$ with $\phi\in
H^1(\mathbb{R}^2)$ and $\chi\in C_0^\infty(]0,+\infty[)$ and
recalling that $H^1(\mathbb{R}^2)$ is separable, it follows from
  (\ref{D'}) that
\begin{equation}\nonumber\left\{\begin{array}{l}\hbox{ for
} x_3 \hbox{ a.e. in }]0,+\infty[,\quad
\displaystyle{\int_{\mathbb{R}^2}\xi_1^a(x_1,x_2,x_3)D_{x_2}\phi(x_1,x_2)
dx_1dx_2=}\\\\ \displaystyle{
\int_{\mathbb{R}^2}\xi_2^a(x_1,x_2,x_3)D_{x_1}\phi(x_1,x_2)
dx_1dx_2, \quad\forall \phi\in
H^1(\mathbb{R}^2).}\end{array}\right.\end{equation} Consequently,
by virtue of the Poincar\'e Lemma (see Section
\ref{preliminaries}), it results that
\begin{equation}\label{poincare}\left\{\begin{array}{l} \hbox{ for
}x_3 \hbox{ a.e. in
}]0,+\infty[,\quad\exists ! w(\cdot,\cdot,x_3)\in W^1(\mathbb{R}^2):\\\\
\xi^a_1(\cdot,\cdot,x_3)=D_{x_1}w(\cdot,\cdot,x_3),\quad
\xi^a_2(\cdot,\cdot,x_3)=D_{x_2}w(\cdot,\cdot,x_3), \quad \hbox{
a.e. in } \mathbb{R}^2.
\end{array}\right.\end{equation}

Now, in equation  (\ref{equariscalata})  with
$\underline{m}=\underline{m}_n$ choose  $u^a=\varphi+c_n$ and
$u^b=c_n$, with $\varphi\in C^\infty_0 (\mathbb{R}^3_a)$ and
$c_n=-(\vert B^a_n \vert+ \vert B^b_n
\vert)^{-1}\int_{B^a_n}\varphi dx$ (such that $(u^a,u^b)\in {\cal
U}_n$).  By multiplying this equation by $h_n$, one has
\begin{equation}\label{equariscalatazlk}\left\{\begin{array}{l}
 \displaystyle{\int_{\mathbb{R}^3_a}
\left(\frac{1}{h_n}D_{x_1}u_n^a,\frac{1}{h_n}D_{x_2}u_n^a,
D_{x_3}u_n^a\right)\left(D_{x_1}\varphi,D_{x_2}\varphi, h_n
D_{x_3}\varphi\right)
dx=}\\\\\displaystyle{\int_{\Omega^a}\left(D_{x_1}\varphi,D_{x_2}\varphi,
h_nD_{x_3}\varphi\right){\underline{m}^a_n} dx},\quad\forall
\varphi\in C^\infty_0
(\mathbb{R}^3_a).\end{array}\right.\end{equation} Then, passing to
the limit, as $n$ diverges, in (\ref{equariscalatazlk}),
convergences  (\ref{mnconv}), (\ref{conunL60}) and (\ref{condu})
give that
$$\int_{\mathbb{R}^3_a}(\xi^a_1,\xi^a_2)(D_{x_1}\varphi,
D_{x_2}\varphi)dx=\int_0^{1}\left((\mu^a_1,\mu^a_2)\int_{\Theta}(D_{x_1}\varphi,
D_{x_2}\varphi)dx_1dx_2\right)dx_3,\quad\forall \varphi\in
C^\infty_0 (\mathbb{R}^3_a).$$ Consequently, arguing as above,
taking into account that $W^1(\mathbb{R}^2)$ is separable, and
using Proposition \ref{Murat}  and (\ref{poincare}), it follows
that, for $x_3$ a.e. in $]1,+\infty[$, $w(\cdot,\cdot,x_3)$ solves
the following problem:
\begin{equation}\nonumber\left\{\begin{array}{l} w(\cdot,\cdot,x_3)\in W^1(\mathbb{R}^2),\\\\\displaystyle{
\int_{\mathbb{R}^2}(D_{x_1}w(x_1,x_2,x_3),D_{x_2}w(x_1,x_2,x_3))(D_{x_1}\phi(x_1,x_2),
D_{x_2}\phi(x_1,x_2))dx_1dx_2=0}, \quad\forall \phi\in
W^1(\mathbb{R}^2),\end{array}\right.\end{equation} while,  for
$x_3$ a.e. in $]0,1[$, $w(\cdot,\cdot,x_3)$ solves the following
one:
\begin{equation}\label{17giugno}\left\{\begin{array}{l} w(\cdot,\cdot,x_3)\in W^1(\mathbb{R}^2),\\\\\displaystyle{
\int_{\mathbb{R}^2}(D_{x_1}w(x_1,x_2,x_3),D_{x_2}w(x_1,x_2,x_3))(D_{x_1}\phi(x_1,x_2),
D_{x_2}\phi(x_1,x_2))dx_1dx_2=}\\\\
\displaystyle{(\mu^a_1(x_3),\mu^a_2(x_3))\int_{\Theta}(D_{x_1}\phi(x_1,x_2),
D_{x_2}\phi(x_1,x_2))dx_1dx_2,\quad\forall \phi\in
W^1(\mathbb{R}^2).}\end{array}\right.\end{equation} Then, by
virtue of Lemma \ref{pc}, it results that, for $x_3$ a.e. in
$]0,+\infty[$,
\begin{equation}\label{idp} w(\cdot,\cdot,x_3)=\left\{\begin{array}{ll} 0,
\hbox{ a.e. in } \mathbb{R}^2, \hbox{ if }x_3>1, \\\\
\mu^a_1(x_3)p(\cdot,\cdot)+ \mu^a_2(x_3)q(\cdot,\cdot),\hbox{ a.e.
in } \mathbb{R}^2, \hbox{ if
}x_3<1,\end{array}\right.\end{equation} with $p$ (resp. $q$) the
unique solution of (\ref{resequxx1}) (resp. (\ref{resequxx2})).

Finally, since Tonelli theorem assures that $\xi^a$ and
$\mu^a_1Dp_1+ \mu^a_2Dp_2$ belong to
$(L^2(\mathbb{R}^3_a))^2\subset(L^1_{\hbox{loc}}(\mathbb{R}^3_a))^2$,
using Fubini theorem with  (\ref{poincare}) and (\ref{idp}) one
entails that
\begin{equation}\nonumber\begin{array}{l}\displaystyle{\int_{\mathbb{R}^3_a}\xi^a\varphi dx=
\int_0^{+\infty}\left(\int_{\mathbb{R}^2}\xi^a\varphi
dx_1dx_2\right)dx_3=}\displaystyle{
\int_0^1\left(\int_{\mathbb{R}^2}\left(\mu^a_1Dp+
\mu^a_2Dq\right)\varphi dx_1dx_2\right)
dx_3=}\\\\\displaystyle{\int_{\mathbb{R}^2\times
]0,1[}\left(\mu^a_1Dp+ \mu^a_2Dq\right)\varphi
dx,\quad\forall\varphi\in
C_0^\infty(\mathbb{R}^3_a)},\end{array}\end{equation}
 that
is
\begin{equation}\label{idxia} \xi^a(x_1,x_2,x_3)=\left\{\begin{array}{ll} (0,0),
\hbox{ a.e. in } \mathbb{R}^2\times ]1,+\infty[, \\\\
\mu^a_1(x_3)Dp(x_1,x_2)+ \mu^a_2(x_3)Dq(x_1,x_2),\hbox{ a.e. in }
\mathbb{R}^2\times ]0,1[,\end{array}\right.\end{equation} with $p$
(resp. $q$) the unique solution of (\ref{resequxx1}) (resp.
(\ref{resequxx2})). Consequently, the first two limits in
(\ref{condu}) hold true for the whole sequence.

The third step is devoted to identify $\xi^b$. To this aim, in
equation (\ref{equariscalata})  with
$\underline{m}=\underline{m}_n$ choose $u^a=c_n$ and
$u^b=\varphi+c_n$, with $\varphi\in C^\infty_0 (\mathbb{R}^3_b)$
and $c_n=-(\vert B^a_n \vert+ \vert B^b_n
\vert)^{-1}\int_{B^b_n}\varphi dx$ (such that $(u^a,u^b)\in {\cal
U}_n$).  By multiplying this equation by $h^2_n$, one has
\begin{equation}\label{equariscalatazlkb}\left\{\begin{array}{l}
 \displaystyle{\int_{\mathbb{R}^3_b}
\left(D_{x_1}u_n^b,D_{x_2}u_n^b,\frac{1}{h^2_n}
D_{x_3}u_n^b\right)\left(h^2_n D_{x_1}\varphi, h^2_n
D_{x_2}\varphi, D_{x_3}\varphi\right)
dx=}\\\\\displaystyle{\int_{\Omega^b}\left(h^2_n D_{x_1}\varphi,
h^2_n D_{x_2}\varphi, D_{x_3}\varphi\right){\underline{m}^b_n}
dx},\quad\forall \varphi\in C^\infty_0
(\mathbb{R}^3_b).\end{array}\right.\end{equation} Then, passing to
the limit, as $n$ diverges, in (\ref{equariscalatazlkb}),
convergences  (\ref{mnconv}), (\ref{conunL60}) and (\ref{condu})
give that
$$\int_{\mathbb{R}^3_b}\xi^bD_{x_3}\varphi dx=\int_{\Omega^b}\mu^b_3D_{x_3}\varphi
dx\quad\forall \varphi\in C^\infty_0 (\mathbb{R}^3_b),$$ which
provides that, for $(x_1,x_2)$ a.e. in $\mathbb{R}^2$, the
function $\xi^b (x_1,x_2,
\cdot)-\widetilde{\mu^b_3}(x_1,x_2,\cdot)$ is constant in
$]-\infty,0[$, where $\widetilde{\mu^b_3}$ denotes the zero
extension of $\mu^b_3$ on $\mathbb{R}^3_b\setminus\Omega^b$. On
the other hand, for $(x_1,x_2)$ a.e. in $\mathbb{R}^2$, $\xi^b
(x_1,x_2, \cdot)-\widetilde{\mu^b_3}(x_1,x_2,\cdot)\in
L^2(]-\infty,0[)$. Then, for $(x_1,x_2)$ a.e. in $\mathbb{R}^2$,
it results that
$$ \xi^b (x_1,x_2, \cdot)=\widetilde{\mu^b_3}(x_1,x_2, \cdot), \quad \hbox{ a.e. in }]-\infty,0[,$$
from which, arguing as above, it follows that
\begin{equation}\label{idxib} \xi^b(x_1,x_2,x_3)=\left\{\begin{array}{ll} 0,
\hbox{ a.e. in } \mathbb{R}^3_b\setminus\Omega^b, \\\\
\mu^b(x_3),\hbox{ a.e. in }
\Omega^b.\end{array}\right.\end{equation} Consequently, also the
last limit in  (\ref{condu}) holds true for the whole sequence.

The last step is devoted to prove that convergences in
(\ref{conunL60}) and (\ref{condu}) are strong, and to obtain
convergence (\ref{conEmag}). By passing to the limit in
(\ref{equariscalata}) with $\underline{m}=\underline{m}_n$,
$u^a=u^a_n$ and $u^b=u^b_n$, and using (\ref{mnconv}),
(\ref{conunL60}), (\ref{condu}), (\ref{idxia}), (\ref{idxib}) and
equation (\ref{17giugno}) with test function $\mu^a_1p+ \mu^a_2q$,
one obtains the convergence of the energies:
\begin{equation}\label{limen}\begin{array}{l}
\displaystyle{\lim_n\Bigg[\int_{\mathbb{R}^3_a}
\left\vert\left(\frac{1}{h_n}D_{x_1}u_n^a,\frac{1}{h_n}D_{x_2}u_n^a,
D_{x_3}u_n^a\right)\right\vert^2dx+}\\\\
\displaystyle{\int_{\mathbb{R}^3_b}\left\vert
\left(D_{x_1}u_n^b,D_{x_2}u_n^b,
\frac{1}{h^2_n}D_{x_3}u_n^b\right)\right\vert^2
dx\Bigg]=}\\\\\displaystyle{ \lim_n\Bigg[\int_{\Omega^a}
\left(\frac{1}{h_n}D_{x_1}u_n^a,\frac{1}{h_n}D_{x_2}u_n^a,
D_{x_3}u_n^a\right)\underline{m}_n^a dx+}\\\\
\displaystyle{\int_{\Omega^b} \left(D_{x_1}u_n^b,D_{x_2}u_n^b,
\frac{1}{h^2_n}D_{x_3}u_n^b\right)\underline{m}_n^b dx\Bigg]=}
\\\\\displaystyle{
\int_{\Omega^a} \left(\mu^a_1Dp+
\mu^a_2Dq\right)(\mu^a_1,\mu^a_2)dx+
\int_{\Omega^b}\vert\mu^b_3\vert^2dx
=}\\\\\displaystyle{\int_{\mathbb{R}^2\times]0,1[}\vert \mu^a_1Dp+
\mu^a_2Dq\vert^2dx + \int_{\Omega^b}\vert\mu^b_3\vert^2dx. }
\end{array}\end{equation}
By combining (\ref{conunL60}), (\ref{condu}), (\ref{idxia}),
(\ref{idxib}) with (\ref{limen}), one deduces limits in
(\ref{conlimprin}). Limit (\ref{conEmag}) is a consequence of
(\ref{conlimprin}) and (\ref{idxiaiiz}).

\end{proof}

\subsection{Proof of theorem \ref{ultimo}\label{abreul}}
\begin{proof} By choosing $\underline{m}=\left((0,1,0),(0,1,0)\right)$ as test function in (\ref{problemariscalato}), and taking into account (\ref{forze})
and that  $\vert \underline{m}_n\vert=1$ a.e. in $\Omega^a\bigcup
\Omega^b$, it is easy to see that there exists $c\in ]0,+\infty[$
such that
\begin{equation}\nonumber\begin{array}{l}\displaystyle{\int_{\Omega^a}\left\vert\left(\frac{1}{h_n}D_{x_1}\underline{m}^a_n|
\frac{1}{h_n}D_{x_2}\underline{m}^a_n|
D_{x_3}\underline{m}^a_n\right)\right\vert^2dx+\int_{\Omega^b}\left\vert\left(D_{x_1}\underline{m}^b_n|D_{x_2}\underline{m}^b_n|\frac{1}{h^2_n}
D_{x_3}\underline{m}^b_n\right)\right\vert^2dx\leq}\\\\
c+E^{mag}_n((0,1,0),(0,1,0)), \quad\forall n\in \mathbb{N},
\end{array}
\end{equation}
where $E^{mag}_n$ is defined (\ref{Emag}). in Consequently, since
proposition \ref{limvarprin} provides that the sequence
$\left\{E^{mag}_n((0,1,0),(0,1,0))\right\}_{n\in \mathbb{N}}$ is
bounded, one obtains that there exists $c\in]0,+\infty[$ such that
\begin{equation}\nonumber\left\{\begin{array}{l}\Vert D_{x_1}\underline{m}^a_n\Vert_{(L^2(\Omega^a))^3}\leq ch_n,\quad \Vert D_{x_2}\underline{m}^a_n\Vert_{(L^2(\Omega^a))^3}\leq ch_n, \quad
\Vert D_{x_3}\underline{m}^a_n\Vert_{(L^2(\Omega^a))^3}\leq c,
\\\\
\Vert D_{x_1}\underline{m}^b_n\Vert_{(L^2(\Omega^b))^3}\leq
c,\quad \Vert
D_{x_2}\underline{m}^b_n\Vert_{(L^2(\Omega^b))^3}\leq c, \quad
\Vert D_{x_3}\underline{m}^b_n\Vert_{(L^2(\Omega^b))^3}\leq
ch^2_n,\end{array}\right. \end{equation} for every $n\in
\mathbb{N}$. Then, taking into account that $\vert
\underline{m}_n\vert=1$ a.e. in $\Omega^a\bigcup \Omega^b$, there
exist an increasing sequence of positive integer numbers
$\{n_i\}_{i\in\mathbb N}$,
$\widehat{\mu}=(\widehat{\mu}^a,\widehat{\mu}^b)\in {\cal M}$,
$\zeta^a\in \left(L^2(\Omega^a, \mathbb{R}^3)\right)^2$ and
$\zeta^b\in L^2(\Omega^b,\mathbb{R}^3)$ such that
\begin{equation}\label{ifconvergenzamweak}\begin{array}{l}
\underline{m}^{a}_{n_i}\rightharpoonup \widehat{\mu}^a\hbox{
weakly  in } H^1(\Omega^a,\mathbb{R}^3),\quad
\underline{m}^{b}_{n_i}\rightharpoonup\widehat{\mu}^b\hbox{ weakly
in } H^1(\Omega^b,\mathbb{R}^3),\end{array}
\end{equation}
\begin{equation}\label{ifconvergenzamweakboz}\left\{\begin{array}{l}\displaystyle{
\left(\frac{1}{h_{n_i}}D_{x_1}\underline{m}^a_{n_i},\frac{1}{h_{n_i}}
D_{x_2}\underline{m}^a_{n_i}\right) \rightharpoonup\zeta^a \hbox{
weakly  in }\left(L^2(\Omega^a,
\mathbb{R}^3)\right)^2,}\\\\\displaystyle{
\frac{1}{h^2_{n_i}}D_{x_3}\underline{m}^b_{n_i}
\rightharpoonup\zeta^b \hbox{ weakly  in }L^2(\Omega^b,
\mathbb{R}^3),}
\end{array}\right.
\end{equation}
as $i$ diverges. Consequently, by virtue of proposition
\ref{limvarprin}, limits in   (\ref{limcon}) hold true and it
results that
\begin{equation}\label{conEmagni}\begin{array}{l}\displaystyle{\lim_iE_{n_i}^{mag}(\underline{m}_{n_i})=
\frac{1}{2}\Bigg(\alpha(\Theta)\int_0^1\vert\widehat{\mu}^a_1\vert^2dx_3+\beta(\Theta)\int_0^1\vert\widehat{\mu}^a_2\vert^2dx_3+}\\\\
\displaystyle{
\gamma(\Theta)\int_0^1\widehat{\mu}^a_1\widehat{\mu}^a_2dx_3+\int_{\Theta}\vert
\widehat{\mu}_3^b\vert^2dx_1dx_2\Bigg),}
\end{array}\end{equation}
where $\alpha(\Theta)$, $\beta(\Theta)$ and $\gamma(\Theta)$ are
defined by (\ref{abc}) with $S=\Theta$.

Now, the goal is to identify $\widehat{\mu}$, $\zeta^a$,
$\zeta^b$, to obtain strong convergences in
(\ref{ifconvergenzamweak}) and in (\ref{ifconvergenzamweakboz}),
and to prove convergence (\ref{convenergie}). To this aim, for
$(\mu^a,\mu^b)\in {\cal M}_{\hbox{reg}}=\{(\mu^a,\mu^b)\in
C^1([0,1],S^2)\times C^1(\overline\Theta,S^2):$
$\mu^a(0)=\mu^b(0)\}$ let, for every $n\in \mathbb{N}$, $v_n
=(v_n^a,v_n^b)\in {\cal M}_n$ be the couple of functions defined
in (2.37) of \cite{GaH1} with $w=\mu^a$ and $\zeta=\mu^b$. Then,
in \cite{GaH1} it is proved that
\begin{equation}\label{rec}\begin{array}{l}\displaystyle{\lim_n\Bigg[\int_{\Omega^a}
\left(\lambda\left\vert\left(\frac{1}{h_n}D_{x_1}v^a_n|\frac{1}{h_n}D_{x_2}v^a_n|
D_{x_3}v^a_n\right)\right\vert^2-2f^a_n v^a_n
 \right)dx+}\\\\
\displaystyle{
\int_{\Omega^b}\left(\lambda\left\vert\left(D_{x_1}v^b_n|D_{x_2}v^b_n|\frac{1}{h^2_n}
D_{x_3}v^b_n\right)\right\vert^2-2f^b_nv^b_n\right)dx\Bigg]=}
\\\\\displaystyle{\vert \Theta\vert\int_0^1\left(\lambda\left\vert
\frac{d\mu^a}{dx_3}\right\vert^2-2F^a\mu^a
 \right)dx_3+\int_{\Theta}\left(\lambda\left\vert
D\mu^b\right\vert^2-2F^b\mu^b\right)dx_1dx_2.}
\end{array}\end{equation}
Moreover, it is easy to see that
\begin{equation}\label{lim2rec} v_n^a\rightarrow \mu^a \hbox{ strongly in }L^2(\Omega^a,\mathbb{R}^3),\quad\quad v^b_n\rightarrow
\mu^b\hbox{ strongly in }
L^2(\Omega^b,\mathbb{R}^3),\end{equation}
 as $n$ diverges. Then,  it follows from (\ref{rec}), (\ref{lim2rec}) and
 proposition \ref{limvarprin} that
\begin{equation}\nonumber\lim_n E_n(v_n)=E(\mu^a,\mu^b)\end{equation}
from which,  using l.s.c. arguments, (\ref{forze}),
(\ref{ifconvergenzamweak}), (\ref{ifconvergenzamweakboz}) and
(\ref{conEmagni}), one obtains that
\begin{equation}\label{lim2recvnzz}\left\{\begin{array}{l}\displaystyle{\lambda\int_{\Omega^a}\vert
\zeta^a\vert^2dx +\lambda\int_{\Omega^b}\vert \zeta^b\vert^2dx+
E(\widehat{\mu}^a,\widehat{\mu}^b)\leq\liminf_i
E_{n_i}(\underline{m}_{n_i})\leq}\\\\\displaystyle{\limsup_i
E_{n_i}(\underline{m}_{n_i})\leq\lim_i
E_{n_i}(v_{n_i})=E(\mu^a,\mu^b).}\end{array}\right.\end{equation}
Since (\ref{lim2recvnzz}) holds true for every $(\mu^a,\mu^b)\in
{\cal M}_{\hbox{reg}}$ and ${\cal M}_{\hbox{reg}}$ is dense in
${\cal M}$ (see \cite{GaH1}), one has that (\ref{lim2recvnzz})
holds also true  for every $(\mu^a,\mu^b)\in {\cal M}$.
Consequently, $\zeta^a=0$, $\zeta^b=0$,
$(\widehat{\mu}^a,\widehat{\mu}^b)$ solves (\ref{problemafinale})
and limit (\ref{convenergie}) holds true. Finally, combining
(\ref{convenergie}) with (\ref{forze}),
(\ref{ifconvergenzamweak}), (\ref{ifconvergenzamweakboz}) and
(\ref{conEmagni}) one obtains that convergences in
(\ref{ifconvergenzamweak}) and in (\ref{ifconvergenzamweakboz})
are strong.
\end{proof}

\section{Wire - wire\label{w-w}}

This section is devoted to study the asymptotic behavior, as $n$
diverges, of problem (\ref{proiniz}) in the second case, that is
the case  wire - wire. Specifically, for every  $n \in \mathbb N$,
let $\Omega_n^a=]-h_n,0[^2\times ]0,1[$,
$\Omega_n^{b,l}=]0,1[\times ]-h_n,0[^2$ and $\Omega_n^{b,r}=
]-h_n,0]^3$. Then, we study the asymptotic behavior, as $n$
diverges, of problem (\ref{minoriginale}) with
$\Omega_n=\Omega_n^a\cup\Omega_n^{b,l}\cup\Omega_n^{b,r}$ (see
Fig. 2).

\subsection{The rescaled problem }
By setting
\begin{equation}\nonumber\left\{\begin{array}{ll}
\mathbb{R}^3_a=\{(x_1,x_2,x_3)\in \mathbb{R}^3: x_3>0\},\\\\
 \mathbb{R}^3_{b,l}=\{(x_1,x_2,x_3)\in \mathbb{R}^3:
 x_3<0,\,x_1>0\},\\\\ \mathbb{R}^3_{b,r}=\{(x_1,x_2,x_3)\in \mathbb{R}^3:
 x_3< 0,\,x_1<0\},\end{array}\right.\end{equation}
for every $n\in \mathbb{N}$, problem (\ref{minoriginale}) is
reformulated
 on a fixed domain through the following rescaling
\begin{equation}\label{resca}T_n:(x_1,x_2,x_3)\in \mathbb{R}^3\rightarrow T_n(x_1,x_2,x_3)=\left\{\begin{array}{ll}(h_nx_1,h_nx_2,x_3), \hbox{ if }(x_1,x_2,x_3)\in
\mathbb{R}^3_a,\\\\(x_1,h_nx_2,h_nx_3), \hbox{ if }
(x_1,x_2,x_3)\in \mathbb{R}^3_{b,l},\\\\
(h_nx_1,h_nx_2,h_nx_3),\hbox{ if } (x_1,x_2,x_3)\in
\mathbb{R}^3_{b,r}.
\end{array}\right.\end{equation}
Namely, setting $$\Omega^a=]-1,0[^2 \times ]0,1[,\quad
\Omega^{b,l}=]0,1[\times ]-1,0[^2,\quad\Omega^{b,r}=]-1,0[^3,$$
and
$$B^a_n=\left]-\frac{2 }{h_n}, \frac{2 }{h_n}\right[^2\times ]0,2[,\quad B^{b,l}_n=
]0, 2[\times \left]-\frac{2}{h_n},0\right[^2,\quad B^{b,r}_n=
\left]-\frac{2}{h_n},0\right[^3,\quad n\in \mathbb{N},$$ the
    space ${\cal U}$ defined in
(\ref{calU}) is rescaled in the following
\begin{equation}\label{spazioriscalatoUw-w}\begin{array}{ll} {\cal U}_n= \big\{&u=(u^a, u^{b,l},u^{b,r})
\in L^1_{loc} (\overline{\mathbb{R}^3_a})\times L^1_{loc}
(\overline{\mathbb{R}^3_{b,l}})\times L^1_{loc}
(\overline{\mathbb{R}^3_{b,r}})\,\,:\\\\& (u^a_{|_{B^a_n}},
u^{b,l}_{|_{B^{b,l}_n}},u^{b,r}_{|_{B^{b,r}_n}}) \in L^2
(B^a_n)\times L^2 (B^{b,l}_n)\times L^2 (B^{b,r}_n),\\\\&
(Du^a,Du^{b,l},Du^{b,r})\in
(L^2(\mathbb{R}^3_a))^3\times (L^2(\mathbb{R}^3_{b,l}))^3\times (L^2(\mathbb{R}^3_{b,r}))^3,\\\\
&\quad\displaystyle{\int_{B^a_n}u^adx+\int_{B^{b,l}_n}u^{b,l}dx+h_n\int_{B^{b,r}_n}u^{b,r}dx=0,}\\\\&
u^a(x_1,x_2,0)= u^{b,l}(h_nx_1,x_2,0),\hbox{ for }(x_1,x_2) \hbox{
a.e. in }]0,+\infty[\times\mathbb{R},\\\\
&u^a(x_1,x_2,0)= u^{b,r}(x_1,x_2,0),\hbox{ for }(x_1,x_2)\hbox{
a.e. in }]-\infty,0[\times\mathbb{R}, \\\\
&u^{b,l}(0,x_2,x_3)= u^{b,r}(0,x_2,x_3),\hbox{ for
}(x_2,x_3)\hbox{ a.e. in }\mathbb{R}\times]-\infty,0[
\big\}.\end{array}
\end{equation}  Then,  for every
$\underline{m}=(\underline{m}^a,\underline{m}^{b,l},\underline{m}^{b,r})\in
L^2(\Omega^a,\mathbb{R}^3)\times
L^2(\Omega^{b,l},\mathbb{R}^3)\times
L^2(\Omega^{b,r},\mathbb{R}^3)$, the following equation
\begin{equation}\label{equariscalataw-w}\left\{\begin{array}{l}u_{\underline{m},n}=(u_{\underline{m},n}^a,u_{\underline{m},n}^{b,l}, u_{\underline{m},n}^{b,r})
\in {\cal U}_n,\\\\
 \displaystyle{\int_{\mathbb{R}^3_a}
\left(\frac{1}{h_n}D_{x_1}u_{\underline{m},n}^a,\frac{1}{h_n}D_{x_2}u_{\underline{m},n}^a,
D_{x_3}u_{\underline{m},n}^a\right)\left(\frac{1}{h_n}D_{x_1}u^a,\frac{1}{h_n}D_{x_2}u^a,
D_{x_3}u^a\right) dx+}\\\\
 \displaystyle{\int_{\mathbb{R}^3_{b,l}}
\left(D_{x_1}u_{\underline{m},n}^{b,l},\frac{1}{h_n}D_{x_2}u_{\underline{m},n}^{b,l},
\frac{1}{h_n}D_{x_3}u_{\underline{m},n}^{b,l}\right)\left(D_{x_1}u^{b,l},\frac{1}{h_n}D_{x_2}u^{b,l},
\frac{1}{h_n}D_{x_3}u^{b,l}\right)
dx+}\\\\
\displaystyle{\frac{1}{h_n}\int_{\mathbb{R}^3_{b,r}}
\left(D_{x_1}u_{\underline{m},n}^{b,r},D_{x_2}u_{\underline{m},n}^{b,r},
D_{x_3}u_{\underline{m},n}^{b,r}\right)\left(D_{x_1}u^{b,r},D_{x_2}u^{b,r},
D_{x_3}u^{b,r}\right)
dx=}\\\\\displaystyle{\int_{\Omega^a}\left(\frac{1}{h_n}D_{x_1}u^a,\frac{1}{h_n}D_{x_2}u^a,
D_{x_3}u^a\right){\underline{m}^a} dx+}\\\\
\displaystyle{\int_{\Omega^{b,l}}
\left(D_{x_1}u^{b,l},\frac{1}{h_n}D_{x_2}u^{b,l},
\frac{1}{h_n}D_{x_3}u^{b,l}\right){\underline{m}^{b,l}}
dx+}\\\\\displaystyle{\int_{\Omega^{b,r}}
\left(D_{x_1}u^{b,r},D_{x_2}u^{b,r},
D_{x_3}u^{b,r}\right){\underline{m}^{b,r}} dx,}\quad\forall
u=(u^a,u^{b,l},u^{b,r})\in {\cal
U}_n,\end{array}\right.\end{equation}
 which  rescales  equation (\ref{resequ}), admits a unique
solution. We note that
$u_{\underline{m},n}=(u_{\underline{m},n}^a,u_{\underline{m},n}^{b,l},
u_{\underline{m},n}^{b,r})$ belongs to $ H^1(\mathbb{R}^3_a)\times
H^1(\mathbb{R}^3_{b,l})\times H^1(\mathbb{R}^3_{b,r})$ up to an
additive constant.

For every $n\in \mathbb{N}$, $H^1(\Omega_n,S^2)$, $F_n\in
L^2(\Omega_n)$ and the functional involved in problem
(\ref{minoriginale}) with
$\Omega_n=\Omega_n^a\cup\Omega_n^{b,l}\cup\Omega_n^{b,r}$ and
renormalized by $h^2_n$ are rescaled in
\begin{equation}\label{spazioriscalatoMw-w}\begin{array}{ll} {\cal M}_n= \Big\{&\underline{m}=(\underline{m}^a, \underline{m}^{b,l}, \underline{m}^{b,r})
\in H^1 (\Omega^a,S^2)\times H^1
(\Omega^{b,l},S^2)\times H^1
(\Omega^{b,r},S^2)\,:\\\\
 & \underline{m}^a(x_1,x_2,0)= \underline{m}^{b,r}(x_1,x_2,0),\hbox{ for }(x_1,x_2) \hbox{
a.e. in }]-1,0[^2,\\\\ & \underline{m}^{b,l}(0,x_2,x_3)=
\underline{m}^{b,r}(0,x_2,x_3),\hbox{ for }(x_2,x_3) \hbox{ a.e.
in }]-1,0[^2\Big\},\end{array}
\end{equation}
\begin{equation}\label{riscalamentoforzew-w}\begin{array}{ll}
f_n:x\in \Omega^a\cup \Omega^{b,l}\cup\Omega^{b,r}\longrightarrow\\\\
f_n(x)=\left\{
\begin{array}{ll} f^{a}_n(x)=
 F_n(h_nx_1,h_nx_2,x_3),\quad \hbox{for }x\hbox{ a.e. in
}\Omega^a, \\\\
 f^{b,l}_n(x)= F_n(x_1,h_nx_2, h_nx_3), \quad
\hbox{for }x\hbox{ a.e. in }\Omega^{b,l},
 \\\\
 f^{b,r}_n(x)= F_n(h_nx_1,h_nx_2, h_nx_3), \quad
\hbox{for }x\hbox{ a.e. in }\Omega^{b,r},
\end{array}\right.\end{array}
\end{equation}
and
\begin{equation}\label{funzionaleriscalatoEw-w}\begin{array}{l}\displaystyle{E_n:\underline{m}=(\underline{m}^a,\underline{m}^{b,l}, \underline{m}^{b,r})\in
{\cal
M}_n\longrightarrow}\\\\\displaystyle{\int_{\Omega^a}\left(\lambda\left\vert\left(\frac{1}{h_n}D_{x_1}\underline{m}^a|
\frac{1}{h_n}D_{x_2}\underline{m}^a|
D_{x_3}\underline{m}^a\right)\right\vert^2+\varphi(\underline{m}^a)-2f^a_n\underline{m}^a
 \right)dx+}\\\\\displaystyle{\frac{1}{2}\int_{\Omega^a}\left(
\left(\frac{1}{h_n}D_{x_1}u^a_{\underline{m},n},\frac{1}{h_n}D_{x_2}u^a_{\underline{m},n},
D_{x_3}u^a_{\underline{m},n}\right)\underline{m}^a
 \right)dx+}\\\\
\displaystyle{
\int_{\Omega^{b,l}}\left(\lambda\left\vert\left(D_{x_1}\underline{m}^{b,l}|\frac{1}{h_n}D_{x_2}\underline{m}^{b,l}|\frac{1}{h_n}
D_{x_3}\underline{m}^{b,l}\right)\right\vert^2+\varphi(\underline{m}^{b,l})-2f^{b,l}_n\underline{m}^{b,l}\right)dx+}
\\\\
\displaystyle{\frac{1}{2}\int_{\Omega^{b,l}}\left(
\left(D_{x_1}u^{b,l}_{\underline{m},n},\frac{1}{h_n}D_{x_2}u^{b,l}_{\underline{m},n},
\frac{1}{h_n}D_{x_3}u^{b,l}_{\underline{m},n}\right)\underline{m}^{b,l}\right)dx+}\\\\
\displaystyle{h_n
\int_{\Omega^{b,r}}\left(\lambda\left\vert\left(\frac{1}{h_n}D_{x_1}\underline{m}^{b,r}|\frac{1}{h_n}D_{x_2}\underline{m}^{b,r}|\frac{1}{h_n}
D_{x_3}\underline{m}^{b,r}\right)\right\vert^2+\varphi(\underline{m}^{b,r})-2f^{b,r}_n\underline{m}^{b,r}\right)dx+}
\\\\
\displaystyle{\frac{1}{2}\int_{\Omega^{b,r}}\left(
\left(D_{x_1}u^{b,r}_{\underline{m},n},D_{x_2}u^{b,r}_{\underline{m},n},
D_{x_3}u^{b,r}_{\underline{m},n}\right)\underline{m}^{b,r}\right)dx,}
\end{array}\end{equation}
respectively. Then,  the function defined by
\begin{equation}\nonumber\left\{
\begin{array}{ll}
 M_n(h_nx_1,h_nx_2,x_3),\hbox{ for }x\hbox{ a.e. in
}\Omega^a, \\\\
 M_n(x_1,h_nx_2,h_n x_3),
\hbox{ for }x\hbox{ a.e. in }\Omega^{b,l},\\\\
 M_n(h_nx_1,h_nx_2,h_n x_3),
\hbox{ for }x\hbox{ a.e. in }\Omega^{b,r},
\end{array}\right.
\end{equation}
with $M_n$ solution of problem (\ref{minoriginale}) with
$\Omega_n=\Omega_n^a\cup\Omega_n^{b,l}\cup\Omega_n^{b,r}$, is a
minimizer of the following problem:
\begin{equation}\label{problemariscalatow-w}\begin{array}{l}
\min\left \{E_n(\underline{m}): \underline{m}\in {\cal M}_n
\right\}.
\end{array}\end{equation}
Actually, the goal of this  section becomes to study the
asymptotic behavior, as $n$ diverges, of problem
(\ref{problemariscalatow-w}). To this aim, it will be assumed that
\begin{equation}\label{forzew-w}\left\{ \begin{array}{l} f^{a}_n\rightharpoonup f^{a}\hbox{ weakly
in } L^2(\Omega^a, \mathbb R^3),\\\\ f^{b,l}_n\rightharpoonup
f^{b,l}\hbox{ weakly in } L^2(\Omega^{b,l},\mathbb R^3),\\\\
f^{b,r}_n\rightharpoonup f^{b,r}\hbox{ weakly in }
L^2(\Omega^{b,r},\mathbb R^3).\end{array}\right.\end{equation}

\subsection{The main result }
Let
\begin{equation}\label{limitspacew-w}\begin{array}{ll}{\cal M}=\Big\{\mu=(\mu^a,\mu^{b,l})\in H^1(\Omega^a,S^2)\times
H^1(\Omega^{b,l},S^2)\,:\,
\mu^a\hbox{ is independent of } (x_1,x_2),\\\\
\quad\quad\mu^b\hbox{ is independent of } (x_2,x_3),\quad
\mu^a(0)=\mu^{b,l}(0) \Big\}\simeq
\\\\ \Big\{\mu=(\mu^a,\mu^{b,l})\in H^1(]0,1[,S^2)\times
H^1(]0,1[,S^2)\,:\, \mu^a(0)=\mu^{b,l}(0) \Big\},
\end{array}\end{equation}
\begin{equation}\label{averagew-w}\left\{\begin{array}{ll}F^a: x_3\in ]0,1[\longrightarrow{\displaystyle\int_{-1}^{0}\int_{-1}^{0}
f^a(x_1,x_2,x_3)dx_1dx_2},\\\\
F^{b,l}:x_1\in ]0,1[\longrightarrow
{\displaystyle\int_{-1}^{0}\int_{-1}^{0}
f^{b,l}(x_1,x_2,x_3)dx_2dx_3},
\end{array}\right.\end{equation}
 and
\begin{equation}\label{funlimitew-w}\begin{array}{l}\displaystyle{E:\mu=(\mu^a,\mu^{b,l})=((\mu^a_1,\mu^a_2,\mu^a_3),(\mu^{b,l}_1,\mu^{b,l}_2,\mu^{b,l}_3))\in
{\cal
M}\longrightarrow}\\\\\displaystyle{\int_0^1\left(\lambda\left\vert
\frac{d\mu^a}{dx_3}\right\vert^2+\varphi(\mu^a)-2F^a\mu^a
 \right)dx_3+}\\\\\displaystyle{
\frac{1}{2}\Bigg(\alpha(]-1,0[^2)\int_0^1\vert\mu^a_1\vert^2dx_3+\beta(]-1,0[^2)\int_0^1\vert\mu^a_2\vert^2dx_3+\gamma(]-1,0[^2)\int_0^1\mu^a_1\mu^a_2dx_3\Bigg)+}\\\\
\displaystyle{\int_0^1\left(\lambda\left\vert
\frac{d\mu^{b,l}}{dx_1}\right\vert^2+\varphi(\mu^{b,l})-2F^{b,l}\mu^{b,l}
 \right)dx_1+}\\\\\displaystyle{
\frac{1}{2}\Bigg(\alpha(]-1,0[^2)\int_0^1\vert\mu^{b,l}_2\vert^2dx_1+\beta(]-1,0[^2)\int_0^1\vert\mu^{b,l}_3\vert^2dx_1+\gamma(]-1,0[^2)\int_0^1
\mu^{b,l}_2\mu^{b,l}_3dx_1\Bigg)}
\end{array}\end{equation}
where  $\alpha(]-1,0[^2)$, $\beta(]-1,0[^2)$ and
$\gamma(]-1,0[^2)$ are defined by (\ref{abc}) with $S=]-1,0[^2$.

This section is devoted to prove the following main result
\begin{Theorem}\label{ultimow-w}Assume  (\ref{forzew-w}).
For every $n\in \mathbb{N}$, let
$\underline{m}_n=(\underline{m}_n^a,\underline{m}_n^{b,l},\underline{m}_n^{b,r})$
be a solution of problem (\ref{problemariscalatow-w}) and
$u_n=(u_n^a,u_n^{b,l},u_n^{b,r})$ be the unique solution of
(\ref{equariscalataw-w}) corresponding to $\underline{m}_n$.
Moreover, let ${\cal M}$ and $E$ be defined by
(\ref{limitspacew-w}) and (\ref{funlimitew-w}), respectively.
Then, there exist an increasing sequence of positive integer
numbers $\{n_i\}_{i\in\mathbb N}$ and
$\widehat{\mu}=(\widehat{\mu}^a,\widehat{\mu}^{b,l})=((\widehat{\mu}^a_1,\widehat{\mu}^a_2,\widehat{\mu}^a_3),
(\widehat{\mu}^{b,l}_1,\widehat{\mu}^{b,l}_2,\widehat{\mu}^{b,l}_3))\in{\cal
M}$, depending on the selected subsequence, such that
\begin{equation}\label{convergenzastrongw-w}\left\{\begin{array}{l}
\underline{m}^{a}_{n_i}\rightarrow \widehat{\mu}^a\hbox{ stongly
in } H^1(\Omega^a,S^2),\\\\
\underline{m}^{b,l}_{n_i}\rightarrow\widehat{\mu}^{b,l}\hbox{
strongly in } H^1(\Omega^{b,l},S^2),\\\\
\underline{m}^{b,r}_{n_i}\rightarrow\widehat{\mu}^a(0)=\widehat{\mu}^{b,l}(0)\hbox{
strongly in } H^1(\Omega^{b,r},S^2),\end{array}\right.
\end{equation}
\begin{equation}\label{convergenzaDstrongw-w}\left\{\begin{array}{ll}
\dfrac{1}{h_{n}}D_{x_1}\underline{m}^{a}_{n}\rightarrow0,\quad\dfrac{1}{h_{n}}D_{x_2}\underline{m}^{a}_{n}\rightarrow
0\hbox{ stongly in }
L^2(\Omega^a,\mathbb{R}^3),\\\\
\dfrac{1}{h_{n}}D_{x_2}\underline{m}^{b,l}_{n}\rightarrow0,\quad\dfrac{1}{h_{n}}D_{x_3}\underline{m}^{b,l}_{n}\rightarrow
0\hbox{ stongly in }
L^2(\Omega^{b,l},\mathbb{R}^3),\\\\
\dfrac{1}{\sqrt{h_{n}}}D\underline{m}^{b,r}_{n}\rightarrow0\hbox{
stongly in }
\left(L^2(\Omega^{b,r},\mathbb{R}^3)\right)^3,\end{array}\right.
\end{equation}
\begin{equation}\label{limconw-w}\left\{\begin{array}
{llll}
 \dfrac{1}{h_{n_i}}D_{x_1}u_{n_i}^a\rightharpoonup
\xi^a_1,
 \quad  \dfrac{1}{h_{n_i}}D_{x_2}u_{n_i}^a\rightharpoonup\xi^a_2, \quad D_{x_3}u_n^a\rightharpoonup0\quad\hbox{ weakly in
 }L^2(\mathbb{R}^3_a),\\\\D_{x_1}u_n^{b,l}\rightharpoonup0,
 \quad  \dfrac{1}{h_{n_i}}D_{x_2}u_{n_i}^{b,l}\rightharpoonup\xi^{b,l}_2, \quad \dfrac{1}{h_{n_i}}D_{x_3}u_{n_i}^{b,l}\rightharpoonup\xi^{b,l}_3\quad\hbox{ weakly in
 }L^2(\mathbb{R}^3_{b,l}),\\\\
Du^{b,r}_n \rightarrow0\quad\hbox{ strongly in
 }\left(L^2(\mathbb{R}^3_{b,r})\right)^3,
\end{array}\right.\end{equation}
as $n$ and $i$ diverge, where $\widehat{\mu}$ is a solution of the
following problem:
\begin{equation}\label{problemafinalew-w}\begin{array}{l}
E(\widehat{\mu})=\min\left \{E(\mu): \mu\in {\cal M} \right\},
\end{array}\end{equation} and
\begin{equation}\label{idxiaiizw-w} (\xi_1^a,\xi^a_2)(x_1,x_2,x_3)=
\left\{\begin{array}{ll} (0,0),
\hbox{ a.e. in } \mathbb{R}^2\times ]1,+\infty[, \\\\
\mu^a_1(x_3)Dp(x_1,x_2)+ \mu^a_3(x_3)Dq(x_1,x_2),\hbox{ a.e. in }
\mathbb{R}^2\times ]0,1[,\end{array}\right.\end{equation}
\begin{equation}\label{"ìidxiaiizw-w} (\xi_2^{b,l},\xi^{b,l}_3)(x_1,x_2,x_3)=
\left\{\begin{array}{ll} (0,0),
\hbox{ a.e. in } ]1,+\infty[\times \mathbb{R}\times]-\infty,0[, \\\\
\mu^{b,l}_2(x_1)Dp(x_2,x_3)+ \mu^{b,l}_3(x_1)Dq(x_2,x_3),\hbox{
a.e. in } ]0,1[\times
\mathbb{R}\times]-\infty,0[,\end{array}\right.\end{equation}with
$p$ (resp. $q$) the unique solution of (\ref{resequxx1}) (resp.
(\ref{resequxx2})). Moreover,
      the convergence of the energies holds true, i.e.
\begin{equation}\label{convenergiew-w}\begin{array}{l}\displaystyle{
\lim_nE_n(\underline{m}_n)= E(\widehat{\mu}).}
\end{array}\end{equation}
\end{Theorem}

\subsection{A convergence result for  the magnetostatic
energy \label{magpotw-w}}
\begin{Proposition}\label{limvarprinw-w}
Let
$\{\underline{m}_n=(\underline{m}_n^a,\underline{m}_n^{b,l},\underline{m}_n^{b,r}
)\}_{n\in\mathbb{ N}}\subset L^2(\Omega^a,S^2)\times
L^2(\Omega^{b,l},S^2)\times L^2(\Omega^{b,r},S^2)$, and let
 $\mu^a=(\mu^a_1,\mu^a_2,\mu^a_3)\in L^2(\Omega^a,S^2)$ be  independent
of $(x_1,x_2)$   and
$\mu^{b,l}=(\mu^{b,l}_1,\mu^{b,l}_2,\mu^{b,l}_3)\in
L^2(\Omega^{b,l},S^2)$ be  independent of $(x_2,x_3)$  such that
\begin{equation}\label{mnconvw-w}\left\{\begin{array}{l}\underline{m}_n^a\rightarrow
\mu^a \hbox{ strongly in }L^2(\Omega^a,\mathbb{R}^3)
,\\\\
\underline{m}_n^{b,l}\rightarrow \mu^{b,l} \hbox{ strongly in
}L^2(\Omega^{b,l},\mathbb{R}^3),\end{array}\right.\end{equation}
as $n$ diverges. Moreover, for every $n\in\mathbb{ N}$ let
$u_n=(u_n^a,u_n^{b,l},u_n^{b,r})$ be the unique solution of
(\ref{equariscalataw-w}) corresponding to $\underline{m}_n$. Then,
it results that
\begin{equation}\label{conlimprinw-w}\left\{\begin{array}
{llll}
 \dfrac{1}{h_n}D_{x_1}u_n^a\rightharpoonup
\xi^a_1,
 \quad  \dfrac{1}{h_n}D_{x_2}u_n^a\rightharpoonup\xi^a_2, \quad D_{x_3}u_n^a\rightharpoonup0\quad\hbox{ weakly in
 }L^2(\mathbb{R}^3_a),\\\\D_{x_1}u_n^{b,l}\rightharpoonup0,
 \quad  \dfrac{1}{h_n}D_{x_2}u_n^{b,l}\rightharpoonup\xi^{b,l}_2, \quad \dfrac{1}{h_n}D_{x_3}u_n^{b,l}\rightharpoonup\xi^{b,l}_3\quad\hbox{ weakly in
 }L^2(\mathbb{R}^3_{b,l}),\\\\
Du^{b,r}_n \rightarrow0\quad\hbox{ strongly in
 }\left(L^2(\mathbb{R}^3_{b,r})\right)^3,
\end{array}\right.\end{equation}
as $n$ diverges, where
\begin{equation}\label{idxiaiizw-w} (\xi_1^a,\xi^a_2)(x_1,x_2,x_3)=
\left\{\begin{array}{ll} (0,0),
\hbox{ a.e. in } \mathbb{R}^2\times ]1,+\infty[, \\\\
\mu^a_1(x_3)Dp(x_1,x_2)+ \mu^a_3(x_3)Dq(x_1,x_2),\hbox{ a.e. in }
\mathbb{R}^2\times ]0,1[,\end{array}\right.\end{equation}
\begin{equation}\label{"ìidxiaiizw-w} (\xi_2^{b,l},\xi^{b,l}_3)(x_1,x_2,x_3)=
\left\{\begin{array}{ll} (0,0),
\hbox{ a.e. in } ]1,+\infty[\times \mathbb{R}\times]-\infty,0[, \\\\
\mu^{b,l}_2(x_1)Dp(x_2,x_3)+ \mu^{b,l}_3(x_1)Dq(x_2,x_3),\hbox{
a.e. in } ]0,1[\times
\mathbb{R}\times]-\infty,0[,\end{array}\right.\end{equation}with
$p$ (resp. $q$) the unique solution of (\ref{resequxx1}) (resp.
(\ref{resequxx2})). Furthermore, one has that
\begin{equation}\label{limenw-w}\begin{array}{l}
\displaystyle{ \lim_n\Bigg[\int_{\Omega^a}
\left(\frac{1}{h_n}D_{x_1}u_n^a,\frac{1}{h_n}D_{x_2}u_n^a,
D_{x_3}u_n^a\right)\underline{m}_n^a
dx+}\\\\\displaystyle{\int_{\Omega^{b,l}} \left(D_{x_1}u_n^{b,l},
\frac{1}{h_n}D_{x_2}u_n^{b,l},
\frac{1}{h_n}D_{x_3}u_n^{b,l}\right)\underline{m}_n^{b,l}
dx+}\\\\\displaystyle{\int_{\Omega^{b,r}} \left(D_{x_1}u_n^{b,r},
D_{x_2}u_n^{b,r}, D_{x_3}u_n^{b,r}\right)\underline{m}_n^{b,r}
dx\Bigg]=}
\\\\\displaystyle{
\alpha(]-1,0[^2)\int_0^1\vert\mu^a_1\vert^2dx_3+\beta(]-1,0[^2)\int_0^1\vert\mu^a_2\vert^2dx_3+
\gamma(]-1,0[^2)\int_0^1\mu^a_1\mu^a_2dx_3+}\\\\\displaystyle{
\alpha(]-1,0[^2)\int_0^1\vert\mu^{b,l}_2\vert^2dx_1+\beta(]-1,0[^2)\int_0^1\vert\mu^{b,l}_3\vert^2dx_1+
\gamma(]-1,0[^2)\int_0^1\mu^{b,l}_2\mu^{b,l}_3dx_1,}
\end{array}\end{equation}
where  $\alpha(]-1,0[^2)$, $\beta(]-1,0[^2)$ and
$\gamma(]-1,0[^2)$ are defined by (\ref{abc}) with $S=]-1,0[^2$.
\end{Proposition}
\begin{proof}
By choosing $u=u_n$ as test function in (\ref{equariscalataw-w})
and taking into account  that
$\{(\underline{m}_n^a,\underline{m}_n^{b,l},\underline{m}_n^{b,r}
)\}_{n\in\mathbb{ N}}\subset L^2(\Omega^a,S^2)\times
L^2(\Omega^{b,l},S^2)\times L^2(\Omega^{b,r},S^2)$, there exists
$c\in]0,+\infty[$ such that
\begin{equation}\label{stimew-w}\left\{\begin{array}{l}\left\Vert
\left(\dfrac{1}{h_n}D_{x_1}u_n^a,\dfrac{1}{h_n}D_{x_2}u_n^a,
D_{x_3}u_n^a\right)\right\Vert_{(L^2(R^3_a))^9}\leq
c,\\\\\left\Vert
\left(D_{x_1}u_n^{b,l},\dfrac{1}{h_n}D_{x_2}u_n^{b,l},
\dfrac{1}{h_n}D_{x_3}u_n^{b,l}\right)\right\Vert_{(L^2(R^3_{b,l}))^9}\leq c,\\\\
\dfrac{1}{\sqrt{h_n}}\left\Vert
\left(D_{x_1}u_n^{b,r},D_{x_2}u_n^{b,r},
D_{x_3}u_n^{b,r}\right)\right\Vert_{(L^2(R^3_{b,r}))^9}\leq c,
\end{array}\right.\end{equation}for every $n\in \mathbb{N}$.

 The
last estimate in (\ref{stimew-w}) gives the last limit in
(\ref{conlimprinw-w}).

By arguing as in the first part of the proof of proposition 5.1 in
\cite{GaHa3}, from  the first two estimates in (\ref{stimew-w})
one derives  the third and the fourth limit in
(\ref{conlimprinw-w}).

 By arguing as in the first two steps of the
proof of proposition \ref{limvarprin}, from the first limit in
(\ref{mnconvw-w}) and the first estimate in (\ref{stimew-w}) one
obtains the first two limits in (\ref{conlimprinw-w}) with
$(\xi^a_1, \xi^a_2)$ defined in (\ref{idxiaiizw-w}). Finally,
using the first limit in (\ref{mnconvw-w}), the first three limits
in (\ref{conlimprinw-w}) and also the last one, taking into
account  that $\{(\underline{m}_n^{b,r} )\}_{n\in\mathbb{
N}}\subset L^2(\Omega^{b,r},S^2)$, and using equation
(\ref{17giugno}) with test function $\mu^a_1p+ \mu^a_2q$, one
obtains that
\begin{equation}\label{limenw-wprimo}\left\{\begin{array}{l}
\displaystyle{ \lim_n\int_{\Omega^a}
\left(\frac{1}{h_n}D_{x_1}u_n^a,\frac{1}{h_n}D_{x_2}u_n^a,
D_{x_3}u_n^a\right)\underline{m}_n^a dx=}
\\\\\displaystyle{
\alpha(]-1,0[^2)\int_0^1\vert\mu^a_1\vert^2dx_3+\beta(]-1,0[^2)\int_0^1\vert\mu^a_2\vert^2dx_3+
\gamma(]-1,0[^2)\int_0^1\mu^a_1\mu^a_2dx_3,}
\\\\
\displaystyle{ \lim_n\int_{\Omega^{b,r}} \left(D_{x_1}u_n^{b,r},
D_{x_2}u_n^{b,r}, D_{x_3}u_n^{b,r}\right)\underline{m}_n^{b,r}
dx=0.}
\end{array}\right.\end{equation}

To prove the  fifth and the sixth  limit in (\ref{conlimprinw-w}),
we introduce other rescalings. Specifically, by setting
\begin{equation}\nonumber\left\{\begin{array}{ll}
\mathbb{R}^3_{a,r}=\{(x_1,x_2,x_3)\in \mathbb{R}^3: x_3> 0,\,x_1<0\},\\\\
 \mathbb{R}^3_l=\{(x_1,x_2,x_3)\in \mathbb{R}^3:
 x_1>0\},\end{array}\right.\end{equation}
for every $n\in \mathbb{N}$, problem (\ref{resequ}) will be
reformulated
 on a fixed domain through the following rescaling:
\begin{equation}\label{rescabis}{\cal T}_n:(x_1,x_2,x_3)\in \mathbb{R}^3\rightarrow {\cal T}_n(x_1,x_2,x_3)=\left\{\begin{array}{ll}(h_nx_1,h_nx_2,x_3), \hbox{ if }
(x_1,x_2,x_3)\in \mathbb{R}^3_{a,r},\\\\
(x_1,h_nx_2,h_nx_3), \hbox{ if }(x_1,x_2,x_3)\in
\mathbb{R}^3_l,\\\\(h_nx_1,h_nx_2,h_nx_3), \hbox{ if }
(x_1,x_2,x_3)\in \mathbb{R}^3_{b,r}
\end{array}\right.\end{equation}
(note that ${T_n}_{|\mathbb{R}^3_{b,r}}={{\cal
T}_n}_{|\mathbb{R}^3_{b,r}}$ , and $T_n(\Omega^a)={\cal
T}_n(\Omega^a)=\Omega^a_n$, $T_n(\Omega^b)={\cal
T}_n(\Omega^b)=\Omega_n^{b,l}$). Namely, setting
$$B^{a,r}_n=\left]-\frac{2 }{h_n}, 0\right[^2\times ]0,2[,\quad B^{l}_n=
]0, 2[\times \left]-\frac{2}{h_n},\frac{2}{h_n}\right[^2,\quad
B^{b,r}_n= \left]-\frac{2}{h_n},0\right[^3,\quad n\in
\mathbb{N},$$
    space ${\cal U}$ defined in
(\ref{calU}) is rescaled in the following
\begin{equation}\label{spazioriscalatoUw-wbis}\begin{array}{ll} {\cal V}_n= \big\{&v=(v^{a,r}, v^l,v^{b,r})
\in L^1_{loc} (\overline{\mathbb{R}^3_{a,r}})\times L^1_{loc}
(\overline{\mathbb{R}^3_l})\times L^1_{loc}
(\overline{\mathbb{R}^3_{b,r}})\,\,:\\\\&
(v^{a,r}_{|_{B^{a,r}_n}}, v^l_{|_{B^l_n}},v^{b,r}_{|_{B^{b,r}_n}})
\in L^2 (B^{a,r}_n)\times L^2 (B^l_n)\times L^2 (B^{b,r}_n),\\\\&
(Dv^{a,r},Dv^l,Dv^{b,r})\in
(L^2(\mathbb{R}^3_{a,r}))^3\times (L^2(\mathbb{R}^3_l))^3\times (L^2(\mathbb{R}^3_{b,r}))^3,\\\\
&\quad\displaystyle{\int_{B^{a,r}_n}v^{a,r}dx+\int_{B^l_n}v^ldx+h_n\int_{B^{b,r}_n}v^{b,r}dx=0,}\\\\&
v^l(0,x_2,x_3)=v^{a,r}(0,x_2,h_nx_3) ,\hbox{ for }(x_2,x_3) \hbox{
a.e. in }\mathbb{R}\times]0,+\infty[,\\\\
&v^l(0,x_2,x_3)= v^{b,r}(0,x_2,x_3),\hbox{ for }(x_2,x_3)\hbox{
a.e. in }\mathbb{R}\times]-\infty,0[,\\\\
&v^{a,r}(x_1,x_2,0)= v^{b,r}(x_1,x_2,0),\hbox{ for
}(x_1,x_2)\hbox{ a.e. in }]-\infty,0[\times\mathbb{R}
\big\}.\end{array}
\end{equation}  Then,  for every
$\underline{m}=(\underline{m}^a,\underline{m}^{b,l},\underline{m}^{b,r})\in
L^2(\Omega^a,\mathbb{R}^3)\times
L^2(\Omega^{b,l},\mathbb{R}^3)\times
L^2(\Omega^{b,r},\mathbb{R}^3)$, the following equation:
\begin{equation}\label{equariscalataw-wbis}\left\{
\begin{array}{l}v_{\underline{m},n}=(v_{\underline{m},n}^{a,r},v_{\underline{m},n}^l, v_{\underline{m},n}^{b,r})
\in {\cal V}_n,\\\\
 \displaystyle{\int_{\mathbb{R}^3_{a,r}}
\left(\frac{1}{h_n}D_{x_1}v_{\underline{m},n}^{a,r},\frac{1}{h_n}D_{x_2}v_{\underline{m},n}^{a,r},
D_{x_3}v_{\underline{m},n}^{a,r}\right)\left(\frac{1}{h_n}D_{x_1}v^{a,r},\frac{1}{h_n}D_{x_2}v^{a,r},
D_{x_3}v^{a,r}\right) dx+}\\\\
 \displaystyle{\int_{\mathbb{R}^3_l}
\left(D_{x_1}v_{\underline{m},n}^l,\frac{1}{h_n}D_{x_2}v_{\underline{m},n}^l,
\frac{1}{h_n}D_{x_3}v_{\underline{m},n}^l\right)\left(D_{x_1}v^l,\frac{1}{h_n}D_{x_2}v^l,
\frac{1}{h_n}D_{x_3}v^l\right)
dx+}\\\\
\displaystyle{\frac{1}{h_n}\int_{\mathbb{R}^3_{b,r}}
\left(D_{x_1}v_{\underline{m},n}^{b,r},D_{x_2}v_{\underline{m},n}^{b,r},
D_{x_3}v_{\underline{m},n}^{b,r}\right)\left(D_{x_1}v^{b,r},D_{x_2}v^{b,r},
D_{x_3}v^{b,r}\right)
dx=}\\\\\displaystyle{\int_{\Omega^a}\left(\frac{1}{h_n}D_{x_1}v^{a,r},\frac{1}{h_n}D_{x_2}v^{a,r},
D_{x_3}v^{a,r}\right){\underline{m}^a} dx+}\\\\
\displaystyle{\int_{\Omega^{b,l}}
\left(D_{x_1}v^l,\frac{1}{h_n}D_{x_2}v^l,
\frac{1}{h_n}D_{x_3}v^l\right){\underline{m}^{b,l}}
dx+}\\\\\displaystyle{\int_{\Omega^{b,r}}
\left(D_{x_1}v^{b,r},D_{x_2}v^{b,r},
D_{x_3}v^{b,r}\right){\underline{m}^{b,r}} dx,}\quad\forall
v=(v^{a,r},v^l,v^{b,r})\in {\cal
V}_n,\end{array}\right.\end{equation}
 which  rescales  equation (\ref{resequ}) by rescaling (\ref{rescabis}), admits a unique
solution.

For every $n\in\mathbb{ N}$, let $v_n=(v_n^{a,r},v_n^l,v_n^{b,r})$
be the unique solution of (\ref{equariscalataw-wbis})
corresponding to $\underline{m}_n$. Arguing as in the first part
of this proof, for a symmetric argument, one can easily prove that
\begin{equation}\label{star1}\begin{array}
{llll}D_{x_1}v_n^l\rightharpoonup0,\quad\dfrac{1}{h_n}D_{x_2}v_n^l\rightharpoonup
\xi^l_2,
 \quad  \dfrac{1}{h_n}D_{x_3}v_n^l\rightharpoonup\xi^l_3\hbox{ weakly in
 }L^2(\mathbb{R}^3_l),
\end{array}\end{equation}
as $n$ diverges, where
\begin{equation}\nonumber (\xi_2^l,\xi^l_3)(x_1,x_2,x_3)=
\left\{\begin{array}{ll} (0,0),
\hbox{ a.e. in } ]1,+\infty[\times \mathbb{R}^2, \\\\
\mu^b_2(x_1)Dp(x_2,x_3)+ \mu^b_3(x_1)Dq(x_2,x_3),\hbox{ a.e. in }
]0,1[\times \mathbb{R}^2,\end{array}\right.\end{equation} with $p$
(resp. $q$) the unique solution of (\ref{resequxx1}) (resp.
(\ref{resequxx2})). Furthermore, one has that
\begin{equation}\label{star2}\begin{array}{l}
\displaystyle{ \lim_n\int_{\Omega^b}
\left(D_{x_1}v_n^l,\frac{1}{h_n}D_{x_2}v_n^l,\frac{1}{h_n}D_{x_3}v_n^l,
\right)\underline{m}_n^b dx=}
\\\\\displaystyle{
\alpha(]-1,0[^2)\int_0^1\vert\mu^b_2\vert^2dx_1+\beta(]-1,0[^2)\int_0^1\vert\mu^b_3\vert^2dx_1+
\gamma(]-1,0[^2)\int_0^1\mu^b_2\mu^b_3dx_1,}
\end{array}\end{equation}
where  $\alpha(]-1,0[^2)$, $\beta(]-1,0[^2)$ and
$\gamma(]-1,0[^2)$ are defined by (\ref{abc}) with $S=]-1,0[^2$.

Now, to conclude it is enough to note that
\begin{equation}\nonumber {\cal
T}_n^{-1}(T_n(x))=x,\quad\forall x \in
\mathbb{R}^3\setminus\{(x_1,x_2,x_3)\in \mathbb{R}^3:x_1\geq0,\,
x_3\geq0\},\quad\forall n\in \mathbb{N},
\end{equation}
\begin{equation}\nonumber v_n\left({\cal
T}_n^{-1}(T_n(x))\right)=u_n(x),\quad \forall x \in
\mathbb{R}^3,\quad\forall n\in \mathbb{N}.
\end{equation}
Consequently, it results that
\begin{equation}\label{star3} v_n(x)=u_n(x),\quad \forall x \in
\mathbb{R}^3\setminus\{(x_1,x_2,x_3)\in \mathbb{R}^3:x_1\geq0,\,
x_3\geq0\},\quad\forall n\in \mathbb{N}.
\end{equation}
Then, combining (\ref{star1}) and (\ref{star2}) with
(\ref{star3}), one obtains the  fifth and the sixth  limit in
(\ref{conlimprinw-w}) and
\begin{equation}\label{star4}\begin{array}{l}
\displaystyle{ \lim_n\int_{\Omega^b}
\left(D_{x_1}u_n^l,\frac{1}{h_n}D_{x_2}u_n^l,\frac{1}{h_n}D_{x_3}u_n^l
\right)\underline{m}_n^b dx=}
\\\\\displaystyle{
\alpha(]-1,0[^2)\int_0^1\vert\mu^b_2\vert^2dx_1+\beta(]-1,0[^2)\int_0^1\vert\mu^b_3\vert^2dx_1+
\gamma(]-1,0[^2)\int_0^1\mu^b_2\mu^b_3dx_1.}
\end{array}\end{equation}
Finally, combining (\ref{limenw-wprimo}) with (\ref{star4}), also
limit (\ref{limenw-w}) holds true.
\end{proof}

\subsection{Proof of theorem \ref{ultimow-w}}
\begin{proof} By choosing $\underline{m}=\left((0,1,0),(0,1,0),(0,1,0)\right)$ as test
function in (\ref{problemariscalatow-w}),  taking into account
(\ref{forzew-w}) and that  $\vert \underline{m}_n\vert=1$ a.e. in
$\Omega^a\bigcup \Omega^{b,l}\bigcup \Omega^{b,r}$, using
proposition \ref{limvarprinw-w} and arguing as in the proof of
theorem \ref{ultimo}, it is easy to prove the existence of
$c\in]0,+\infty[$ such that
\begin{equation}\nonumber\left\{\begin{array}{l}\Vert D_{x_1}\underline{m}^a_n\Vert_{(L^2(\Omega^a))^3}\leq ch_n,\quad \Vert D_{x_2}\underline{m}^a_n\Vert_{(L^2(\Omega^a))^3}\leq ch_n, \quad
\Vert D_{x_3}\underline{m}^a_n\Vert_{(L^2(\Omega^a))^3}\leq c,
\\\\
\Vert
D_{x_1}\underline{m}^{b,l}_n\Vert_{(L^2(\Omega^{b,l}))^3}\leq
c,\quad \Vert
D_{x_2}\underline{m}^{b,l}_n\Vert_{(L^2(\Omega^{b,l}))^3}\leq
h_nc, \quad \Vert
D_{x_3}\underline{m}^{b,l}_n\Vert_{(L^2(\Omega^{b,l}))^3}\leq
ch_n, \\\\
\Vert D\underline{m}^{b,r}_n\Vert_{(L^2(\Omega^{b,l}))^9}\leq
c\sqrt{h_n} ,\end{array}\right.
\end{equation}
for every $n\in \mathbb{N}$. Then, taking into account again that
$\vert \underline{m}_n\vert=1$ a.e. in $\Omega^a\bigcup
\Omega^{b,l}\bigcup \Omega^{b,r}$, there exist an increasing
sequence of positive integer numbers $\{n_i\}_{i\in\mathbb N}$,
$\widehat{\mu}^a\in H^1(\Omega^a,S^2)$ independent of $(x_1,x_2)$,
$\widehat{\mu}^{b,l}\in H^1(\Omega^{b,l},S^2)$ independent of
$(x_2,x_3)$ and $c\in S^2$, $\zeta^a\in \left(L^2(\Omega^a,
\mathbb{R}^3)\right)^2$, $\zeta^{b,l}\in \left(L^2(\Omega^{b,l},
\mathbb{R}^3)\right)^2$, $\zeta^{b,r}\in \left(L^2(\Omega^{b,r},
\mathbb{R}^3)\right)^3$ such that
\begin{equation}\label{ifconvergenzamweakw-w}\left\{\begin{array}{l}
\underline{m}^{a}_{n_i}\rightharpoonup \widehat{\mu}^a\hbox{
weakly  in } H^1(\Omega^a,\mathbb{R}^3),\\\\
\underline{m}^{b,l}_{n_i}\rightharpoonup\widehat{\mu}^{b,l}\hbox{
weakly in } H^1(\Omega^{b,l},\mathbb{R}^3),\\\\
\underline{m}^{b,r}_{n_i}\rightharpoonup c\hbox{ weakly in }
H^1(\Omega^{b,r},\mathbb{R}^3),\end{array}\right.
\end{equation}
\begin{equation}\label{ifconvergenzamweakbozw-w}\left\{\begin{array}{l}\displaystyle{
\left(\frac{1}{h_{n_i}}D_{x_1}\underline{m}^a_{n_i},\frac{1}{h_{n_i}}
D_{x_2}\underline{m}^a_{n_i}\right) \rightharpoonup\zeta^a \hbox{
weakly  in }\left(L^2(\Omega^a,
\mathbb{R}^3)\right)^2,}\\\\\displaystyle{
\left(\frac{1}{h_{n_i}}D_{x_2}\underline{m}^{b,l}_{n_i},\frac{1}{h_{n_i}}
D_{x_3}\underline{m}^{b,l}_{n_i}\right) \rightharpoonup\zeta^{b,l}
\hbox{ weakly  in }\left(L^2(\Omega^{b,l},
\mathbb{R}^3)\right)^2,}\\\\
\displaystyle{ \frac{1}{\sqrt{h_{n_i}}}D\underline{m}^{b,r}
\rightharpoonup\zeta^{b,r} \hbox{ weakly  in
}\left(L^2(\Omega^{b,r}, \mathbb{R}^3)\right)^3,}
\end{array}\right.
\end{equation}
as $i$ diverges. Consequently, since one has that
\begin{equation}\nonumber\left\{\begin{array}{l}\underline{m}_n^a(x_1,x_2,0)=
\underline{m}_n^{b,r}(x_1,x_2,0),\hbox{ for }(x_1,x_2) \hbox{ a.e.
in }]-1,0[^2,\\\\ \underline{m}_n^{b,l}(0,x_2,x_3)=
\underline{m}_n^{b,r}(0,x_2,x_3),\hbox{ for }(x_2,x_3) \hbox{ a.e.
in }]-1,0[^2,\end{array}\right.
\end{equation} for every $n\in \mathbb{N}$, it follows that
$\widehat{\mu}^a(0)=c=\widehat{\mu}^{b,l}(0)$, that is
$\widehat{\mu}=(\widehat{\mu}^a,\widehat{\mu}^{b,l})\in {\cal M}$.
Moreover, by virtue of proposition \ref{limvarprinw-w}, limits in
(\ref{limconw-w}) hold true and it results that
\begin{equation}\label{conEmagniw-w}\begin{array}{l}
\displaystyle{ \lim_i\Bigg[\int_{\Omega^a}
\left(\frac{1}{h_{n_i}}D_{x_1}u_{n_i}^a,\frac{1}{h_{n_i}}D_{x_2}u_{n_i}^a,
D_{x_3}u_{n_i}^a\right)\underline{m}_{n_i}^a
dx+}\\\\\displaystyle{\int_{\Omega^{b,l}}
\left(D_{x_1}u_{n_i}^{b,l}, \frac{1}{h_{n_i}}D_{x_2}u_{n_i}^{b,l},
\frac{1}{h_{n_i}}D_{x_3}u_{n_i}^{b,l}\right)\underline{m}_{n_i}^{b,l}
dx+}\\\\\displaystyle{\int_{\Omega^{b,r}}
\left(D_{x_1}u_{n_i}^{b,r}, D_{x_2}u_{n_i}^{b,r},
D_{x_3}u_{n_i}^{b,r}\right)\underline{m}_{n_i}^{b,r} dx\Bigg]=}
\\\\\displaystyle{
\alpha(]-1,0[^2)\int_0^1\vert\mu^a_1\vert^2dx_3+\beta(]-1,0[^2)\int_0^1\vert\mu^a_2\vert^2dx_3+
\gamma(]-1,0[^2)\int_0^1\mu^a_1\mu^a_2dx_3+}\\\\\displaystyle{
\alpha(]-1,0[^2)\int_0^1\vert\mu^{b,l}_2\vert^2dx_1+\beta(]-1,0[^2)\int_0^1\vert\mu^{b,l}_3\vert^2dx_1+
\gamma(]-1,0[^2)\int_0^1\mu^{b,l}_2\mu^{b,l}_3dx_1,}
\end{array}\end{equation}
where  $\alpha(]-1,0[^2)$, $\beta(]-1,0[^2)$ and
$\gamma(]-1,0[^2)$ are defined by (\ref{abc}) with $S=]-1,0[^2$.

Now, the goal is to identify $\widehat{\mu}$, $\zeta^a$,
$\zeta^{b,l}$, $\zeta^{b,r}$, to obtain strong convergences in
(\ref{ifconvergenzamweakw-w}) and in
(\ref{ifconvergenzamweakbozw-w}), and to prove limit in
(\ref{convenergiew-w}). To this aim, for
$(\widehat{\mu}^a,\widehat{\mu}^{b,l})\in {\cal M}$, let us set
\begin{equation}\nonumber v=\left\{\begin{array}{l}\widehat{\mu}^a, \hbox{ in
}\Omega^a,\\\\
\widehat{\mu}^{b,l}, \hbox{ in }\Omega^{b,l},\\\\
\widehat{\mu}^a(0)=\widehat{\mu}^{b,l}(0), \hbox{ in
}\Omega^{b,r}.
\end{array}\right.\end{equation}
Obviously, $v\in {\cal M}_n$, for every $n\in N$. Then, by virtue
of l.s.c. arguments, (\ref{forzew-w}),
(\ref{ifconvergenzamweakw-w}), (\ref{ifconvergenzamweakbozw-w})
and (\ref{conEmagniw-w}) and proposition \ref{limvarprinw-w},   it
results that
\begin{equation}\label{lim2recvnzzw-w}\begin{array}{l}\displaystyle{\lambda\int_{\Omega^a}\vert
\zeta^a\vert^2dx +\lambda\int_{\Omega^{b,l}}\vert
\zeta^{b,l}\vert^2dx+\lambda\int_{\Omega^{b,r}}\vert
\zeta^{b,r}\vert^2dx+
E(\widehat{\mu}^a,\widehat{\mu}^b)\leq\liminf_i
E_{n_i}(\underline{m}_{n_i})\leq}\\\\\displaystyle{\limsup_i
E_{n_i}(\underline{m}_{n_i})\leq\lim_i
E_{n_i}(v)=E(\widehat{\mu}^a,\widehat{\mu}^b), \quad\forall
(\widehat{\mu}^a,\widehat{\mu}^{b,l})\in {\cal M}
.}\end{array}\end{equation}
 Consequently, $\zeta^a=0$,
$\zeta^{b,l}=0$, $\zeta^{b,r}=0$,
$(\widehat{\mu}^a,\widehat{\mu}^b)$ solves
(\ref{problemafinalew-w}) and limit (\ref{convenergiew-w}) holds
true. Finally, combining (\ref{convenergiew-w}) with
(\ref{forzew-w}), (\ref{ifconvergenzamweakw-w}),
(\ref{ifconvergenzamweakbozw-w}) and (\ref{conEmagniw-w}) one
obtains that limits in (\ref{ifconvergenzamweakw-w}) and in
(\ref{ifconvergenzamweakbozw-w}) are strong.
\end{proof}

\section*{Acknowledgments}
This work was supported by GNAMPA, by Universit\'e Paris Est and
by project "MICINN MTM 2009-12628".

\end{document}